\documentclass{llncs}
\usepackage{graphicx} 
\usepackage{algorithm}
\usepackage{algpseudocode}
\usepackage{amsmath}
\usepackage{amssymb}
\usepackage{pgfplots}
\usepackage{booktabs}
\usepackage{subcaption}
\usepackage{multirow}
\pgfplotsset{compat=1.18}
\usepackage{tikz}
\usetikzlibrary{arrows.meta, positioning}
\usetikzlibrary{trees, arrows.meta}
\usetikzlibrary{matrix, positioning, backgrounds, fit, arrows.meta}

\usepackage[hidelinks]{hyperref}

\usepackage{etoolbox}
\newtoggle{extended}
\toggletrue{extended}  
\newcommand{\pubext}[2]{%
  \iftoggle{extended}{#2}{#1}%
}

\usepackage{fancyvrb}
\fvset{fontsize=\footnotesize}

\usepackage{xcolor}
\usepackage{xspace}

\newcommand{\uep}{$uEP$\xspace}
\newcommand{\lep}{$\ell EP$\xspace}
\newcommand{\leep}{$\ell^*EP$\xspace}
\newcommand{\G}{\mathcal{G}}
\newcommand{\N}{\mathcal{N}}
\newcommand{\E}{\mathcal{E}}
\newcommand{\V}{\mathcal{V}}
\newcommand{\U}{\mathcal{U}}
\newcommand{\PP}{\mathcal{P}}
\newcommand{\LL}{\mathcal{L}}
\newcommand{\A}{\mathcal{A}}

\newcommand{\Lp}{L_{\texttt{OS\underline{P}}}}
\newcommand{\Ls}{L_{\texttt{PO\underline{S}}}}
\newcommand{\Lo}{L_{\texttt{SP\underline{O}}}}
\newcommand{\Ap}{A_{\texttt{P}}}
\newcommand{\As}{A_{\texttt{S}}}
\newcommand{\Ao}{A_{\texttt{O}}}

\newcommand{\leap}{\textit{leap}}
\newcommand{\timeout}{\textcolor{gray}{TO}}

\newcommand{\edgepattern}[3]{$\mathtt{(}$#1$\mathtt{)-[}$#2$\mathtt{]\rightarrow(}$#3$\mathtt{)}$}

\newcommand{\GN}[1]{\textcolor{blue}{GN: #1}}
\newcommand{\AH}[1]{\textcolor{green!50!black}{AH: #1}}

\newcommand{\ah}[1]{\textcolor{red!50!black}{#1}}

\newcommand{\camready}[1]{#1}

\newcommand{\no}[1]{}

\title{Uplifting the Superpowers \\ of Worst-Case-Optimal Join Algorithms\thanks{Funded by ANID -- Millennium Science Initiative Program -- Code ICN17\_002, Chile. AGB is funded by MCIN/AEI/10.13039/501100011033 and EU/ERDF ``A way of making Europe'' PID2022-141027NB-C21 (EARTHDL).  CITIC is funded by the Xunta de Galicia and co-financed by the EU through FEDER Galicia: grant ED431G 2023/01. GN is funded by Fondecyt Grant 1260080, ANID, Chile.}}
\author{Adrián Gómez-Brandón\inst{1,3} \and Aidan Hogan\inst{2,3} \and Gonzalo Navarro\inst{2,3}}
\institute{
Universidade da Coru\~na, CITIC, Facultade de Informática, Spain
\and
Department of Computer Science, University of Chile, Chile 
\and
IMFD --- Millennium Institute for Foundational Research on Data, Chile
}

\begin{document}

\maketitle

\begin{abstract}
Worst-case-optimal (wco) join algorithms have demonstrated their power -- in both theory and practice -- to efficiently solve complex Basic Graph Patterns (BGPs). Modern graph query languages, such as SPARQL and GQL, have BGPs at their core, but also have a wide range of other features, including filters (aka.\ selections). Such conditions are typically handled via pre- or post-filtering, before or after processing the BGPs. In this paper we show how to uplift wco join algorithms so as to incorporate such filtering natively, improving efficiency. We demonstrate the superiority of this approach by extending the \textit{Ring} -- a compact index that provides wco resolution of BGPs within almost no extra space on top of the graph -- so as to handle property graphs using our new techniques while retaining compactness. We implement this extension and experimentally show that it outperforms various baseline systems.

\keywords{worst-case optimal \and graph databases \and filter \and joins \and GQL}
\end{abstract}

\section{Introduction}


There is an ever-growing demand for increasingly-efficient systems for querying knowledge graphs. In the context of open knowledge graphs, RDF and SPARQL dominate~\cite{MalyshevKGGB18}. In the context of enterprise knowledge graphs, property graphs, query languages such as Cypher~\cite{FrancisGGLLMPRS18} and GQL~\cite{DeutschFGHLLLMM22}, and commercial engines such as Neo4j~\cite{Webber12}, are a popular alternative. Despite these seemingly disparate graph models and query languages, the same  key primitives -- in particular, graph patterns~\cite{AABHRV17} -- underlie both RDF/SPARQL and property graph query engines. Techniques for querying one form of knowledge graph model can often be applied for the other, with various systems~\cite{BebeeCGGKKMMPRR18,LeeuwenMWFY22,VR+23} natively supporting both.

A promising recent line of research for efficiency evaluating graph queries is to combine compact data structures and worst-case optimal joins~\cite{arroyuelo2024ring,ArroyueloNRR22,BigerlCBSSN20,LiuLPLZ25,ArroyueloBFGN26}. Such works have focused on evaluating BGPs over triple-based models such as RDF. We extend these works in two ways. First, we show how techniques originally proposed for RDF-like models -- specifically the Ring~\cite{arroyuelo2024ring} -- can be adapted to the property graph model. Second, we further add support for selections, aka.\ \textit{filters}, which are among the most widely-used query operators in practice: Bonifati et al.~\cite{BonifatiMT20} find, for example, that 40\% of SPARQL queries in real-world logs use \texttt{FILTER}, second only to \texttt{SELECT}. While filters can be evaluated over the results of a BGP, one of the most fundamental query optimizations is to ``push down'' filters, using them to eliminate intermediate results as soon as possible. We propose techniques to enable the push-down of filters by translating them into efficient operations directly over a custom compact data structure.

\paragraph{Paper structure.} Section~\ref{sec:related} discusses related works on compact data structures and wco joins for graphs. Section~\ref{sec:pre} presents preliminaries. Section~\ref{sec:leaptors} introduces the core technical abstraction of ``leaptors'': iterators enabling wco execution for graph queries with filters. 
Section~\ref{sec:pg} proposes the PG-Ring for the concrete setting of property graphs and GQL-like queries, describing the implementation of a static, in-memory database engine using leaptors. Section~\ref{sec:exp} presents experimental results for two benchmarks comparing PG-Ring with internal and external baselines. Section~\ref{sec:conclude} wraps up our contributions and presents future work.

\section{Related Work}\label{sec:related}


In the context of efficiently evaluating graph patterns, recent years have seen two promising lines of research emerge.

The first line of research refers to \textit{worst-case optimal} (wco) joins~\cite{NgoPRR18}, which provide -- in the context of querying graphs -- theoretical and practical guarantees on the maximum cost of enumerating all solutions for a basic graph pattern (BGP). Such algorithms have been shown empirically to provide notable speed-ups over traditional join algorithms, upholding their theoretical underpinnings for RDF/SPARQL~\cite{HRRSiswc19,BigerlCBSSN20,VR+23} and similar triple-based graph models~\cite{MhedhbiS19}. However, the benefits of wco joins in terms of time often come at the cost of space~\cite{HRRSiswc19,MhedhbiS19,VR+23} as they often require storing additional index orders.

The second line of research explores compressed data structures for indexing and querying graphs. In the RDF/SPARQL world, formats such as HDT~\cite{FernandezMGPA13}, COTTAS~\cite{ArenasGuerreroF25}, etc., propose compressed RDF file formats that can evaluate triple patterns, which can be used to build more complex query engines~\cite{TaelmanHSV18}. 

More recent works have combined both lines of research by investigating succinct data structures that support wco joins. These in-memory structures include the Ring (based on wavelet trees)~\cite{arroyuelo2024ring}, QDags (based on quadtrees)~\cite{ArroyueloNRR22}, Hypertries (based on bit tensors)~\cite{BigerlCBSSN20}, RDF-TDAA (based on Directly Addressable Arrays)~\cite{LiuLPLZ25}, and RDFCSA (based on compressed suffix arrays)~\cite{ArroyueloBFGN26}. These works support wco queries while addressing their key limitation: space. 

However, to the best of our knowledge, no work has investigated succinct data structures that support wco joins and filters (beyond equality). Furthermore, we are not aware of works applying wco joins over property graphs.

\section{Graph Databases and Worst-Case-Optimal Algorithms}
\label{sec:pre}

\subsection{RDF and Basic Graph Patterns}

An RDF graph \cite{rdfconcepts11} $G$ is a set of triples $(s,p,o)$, each denoting an edge $s \stackrel{p}{\to} o$. Node $s$ is called the {\em subject}, label $p$ the {\em predicate}, and node $o$ the {\em object}. A {\em Basic Graph Pattern (BGP)} $Q$ is a set of {\em triple patterns} $(x,y,z)$ where any element can be a constant or a variable. A {\em solution} to a BGP is a function that assigns a constant to each variable of the BGP so that all resulting triples occur in the graph. {\em Solving} a BGP $Q$ consists of producing the set $Q(G)$ of all its solutions.

Let $G$ have $N$ triples. The {\em AGM bound} \cite{AGM13} of $Q$, $Q^* \ge |Q(G)|$, is the maximum size of a solution set of $Q$ over any graph of $N$ triples. An algorithm solving $Q$ is {\em worst-case-optimal} (wco)~\cite{AGM13,HRRSiswc19} if it works in time $\tilde{O}(Q^*)$, where $\tilde{O}$ hides factors polylogarithmic on $N$, or independent of $N$ (such as $|Q|$). The wco concept and wco algorithms have had notable impact on the resolution of complex and cyclic BGPs. Such algorithms outperform every algorithm that enforces the triple pattern conditions {\em consecutively} (as classic database join plans do) by enforcing all the conditions {\em simultaneously}. The classical example is the triangle query $R(a,b) \bowtie S(b,c) \bowtie T(c,a)$, which if the tables have $N$ pairs takes $\Omega(N^2)$ worst-case time with any consecutive-enforcing algorithm, whereas wco algorithms solve it in time $\tilde{O}(N^{3/2})$.

\subsection{Leapfrog TrieJoin}

{\em Leapfrog Triejoin (LTJ)} \cite{leapfrog} is a seminal wco join algorithm. It precalculates six tries where the graph triples are inserted as strings of length 3 in all different orders, $\mathcal{T} = \{ \texttt{SPO}, \texttt{SOP}, \texttt{POS}, \texttt{PSO}, \texttt{OSP}, \texttt{OPS}\}$. It chooses an order to handle the variables, which is called a {\em variable elimination order (VEO)}, and then assigns each triple pattern $t \in Q$ to a trie $\tau_t \in \mathcal{T}$, such that (i) the constants of $t$ appear first in the order of $\tau_t$, and (ii) the variables of $t$ appear in $\tau_t$ in the same order of the VEO. For each triple pattern $t \in Q$, LTJ first descends by its constants on $\tau_t$, and the node $v_t \in \tau_t$ arrived at is the {\em locus} of $t$. Next, LTJ takes one variable $x$ after the other in the VEO. Let $T_x = \{ t \in Q, \textrm{ $x$ appears in } t\}$. For each $t \in T_x$, the possible values of $x$ are the children of $v_t$. LTJ then {\em intersects} the children of all the loci $\{ v_t, t \in T_x\}$. For each value $c$ in the intersection, it branches with the {\em binding} $x := c$ and descends by value $c$ from all the involved loci, $v_t \gets child(v_t,c)$ for all $v_t \in T_x$. By the time all the variables are bound in this way, the set of bindings in that branch forms a new solution to report. Fig.~\ref{fig:tries} shows a toy graph and its trie \texttt{POS} (ignore property \textsf{age} of nodes for now).

\begin{figure}[t]
\begin{tikzpicture}[
    font=\sffamily\scriptsize,
    arrow/.style={-{Stealth[scale=0.8]}, thick, shorten >=2pt, shorten <=2pt},
    label_style/.style={font=\tiny\sffamily, fill=white, inner sep=1pt},
    age_style/.style={font=\tiny\sffamily, text=gray!90, anchor=south east, inner sep=0.5pt, yshift=0pt}
]

    \begin{scope}[node distance=0.3cm and 1.6cm, yshift=-1.5cm]
        \node[rectangle, draw, fill=blue!5, rounded corners, minimum width=1.4cm] (Alice) {Alice};
        \node[age_style] at (Alice.north east) {age: \ \ \ \ }; 
        
        \node[rectangle, draw, fill=blue!5, rounded corners, minimum width=1.4cm, below=of Alice] (Bob) {Bob};
        \node[age_style] at (Bob.north east) {age: 55};
        
        \node[rectangle, draw, fill=blue!5, rounded corners, minimum width=1.4cm, below=of Bob] (Carla) {Carla};
        \node[age_style] at (Carla.north east) {age: 45};
        
        \node[rectangle, draw, fill=blue!5, rounded corners, minimum width=1.4cm, below=of Carla] (Diego) {Diego};
        \node[age_style] at (Diego.north east) {age: 60};
        
        \node[rectangle, draw, fill=blue!5, rounded corners, minimum width=1.4cm, below=of Diego] (Emily) {Emily};
        \node[age_style] at (Emily.north east) {age: 25};

        \node[rectangle, draw, fill=orange!10, rounded corners, left=of Alice, yshift=-0.4cm] (Finance) {Finance};
        \node[rectangle, draw, fill=orange!10, rounded corners, left=of Diego, yshift=-0.4cm] (CS) {CS};
        \node[rectangle, draw, fill=green!5, rounded corners, right=of Alice, yshift=-0.4cm] (Europe) {Europe};
        \node[rectangle, draw, fill=green!5, rounded corners, right=of Diego, yshift=-0.4cm] (America) {America};

        \draw[arrow] (Alice) -- (Finance) node[midway, label_style] {{\em works}};
        \draw[arrow] (Emily) -- (Finance) node[label_style, pos=0.85] {\em works};
        \draw[arrow] (Bob) -- (CS) node[midway, label_style, pos=0.4] {\em works};
        \draw[arrow] (Carla) -- (CS) node[midway, label_style] {\em works};
        \draw[arrow] (Diego) -- (CS) node[midway, label_style] {\em works};

        \draw[arrow, dashed] (Alice) -- (Europe) node[midway, label_style] {\em lives};
        \draw[arrow, dashed] (Bob) -- (Europe) node[midway, label_style] {\em lives};
        \draw[arrow, dashed] (Carla) -- (Europe) node[midway, label_style] {\em lives};
        \draw[arrow, dashed] (Diego) -- (America) node[midway, label_style] {\em lives};
        \draw[arrow, dashed] (Emily) -- (America) node[midway, label_style] {\em lives};
    \end{scope}

    \begin{scope}[xshift=6.5cm, yshift=-1.2cm,
        level 1/.style={sibling distance=2.5cm, level distance=0.8cm},
        level 2/.style={sibling distance=1.2cm, level distance=1cm},
        level 3/.style={sibling distance=0.4cm, level distance=1cm},
        node_box/.style={rectangle, draw, rounded corners, fill=gray!5, minimum width=1.1cm, font=\sffamily\scriptsize},
        leaf/.style={rotate=90, anchor=east, font=\tiny\sffamily, inner sep=2pt}
    ]
        \node[circle, draw, fill=black, inner sep=1pt] {}
            child { node[node_box, fill=orange!10] {\em works}
                child { node[node_box, fill=green!5] {Finance}
                    child { node[leaf] {Alice} }
                    child { node[leaf] {Emily} }
                }
                child { node[node_box, fill=green!5] {CS}
                    child { node[leaf] {Bob} }
                    child { node[leaf] {Carla} }
                    child { node[leaf] {Diego} }
                }
            }
            child { node[node_box, fill=orange!10] {\em lives}
                child { node[node_box, fill=red!5] {Europe}
                    child { node[leaf] {Alice} }
                    child { node[leaf] {Bob} }
                    child { node[leaf] {Carla} }
                }
                child { node[node_box, fill=red!5] {America}
                    child { node[leaf] {Diego} }
                    child { node[leaf] {Emily} }
                }
            };
    \end{scope}

\end{tikzpicture}
\caption{On the left, a graph database of where people live and area they work in; person nodes have also property \textsf{age}. On the right, the trie \texttt{POS} for the graph.}
\label{fig:tries}
\end{figure}

A way to implement LTJ intersections is to take the smallest child value $c$ of one locus $v_t \in T_x$, and search the children of the next locus $v_{t'} \in T_x$ for the smallest child value $c' \ge c$; this operation is called {\em leap}, that is, $c' \gets \leap(c)$. We then assign $c \gets c'$ and go on with the third locus $v_{t''} \in T_x$, and so on. We continue cycling over the loci until a whole pass over $T_x$ retains the same $c$ value, so $c$ is the next element in the intersection. After branching on $x := c$, we restart the process with the next child value in some loci. This algorithm is optimal in the sense that its cost is $\tilde{O}(alt)$, where $alt$ is the {\em alternation complexity}~\cite{BK03} of the sets to intersect. This is the least number of switches from one sorted sequence of values to another needed to cover them all; for example intersecting $\{ 2i, 1 \le i \le n \} \cap \{ 2i-1, 1 \le i \le n\} = \emptyset$ is hard because we must switch at every new value ($alt = 2n-1$), whereas intersecting $\{ i, 1 \le i \le n\} \cap \{ i, n < i \le 2n\} = \emptyset$ is easy ($alt=1$; it suffices to compare two elements to certify the output). Other intersection techniques used in some LTJ implementations, like probing the elements of the smallest set in the others, do not yield this guarantee. 

\subsection{Beyond RDF: Filtering by properties and labels}
\label{sec:beyond-rdf}

Most graph query languages, like SPARQL \cite{sparql11}, GQL \cite{francis2023researcher}, and SQL/PGQ \cite{DeutschFGHLLLMM22}, have BGPs at their core, but also include filters. GQL and SQL/PGQ assume property graphs \cite{angles2018property}, which extend triples to allow nodes and edges to have {\em properties} with {\em values}. Queries may then specify that, to bind some variable $x$ to a node or edge in the triple pattern, that node or edge must contain a property $\pi$ and that its corresponding value $x.\pi$ must be within a range, say $a \le x.\pi \le b$ (where $a$ and $b$ can be constants or other property values, say $x.\pi \le y.\pi'$). It is also customary to assign {\em labels} to nodes and edges, so that $x.\ell$ requires that variable $x$ is assigned to an element having label $\ell$. While SPARQL filters are applied directly on nodes and predicates in triples, we will allow zero-to-many values per property and zero-to-many labels, and hence our scheme is also flexible enough to capture features of RDF, as we will discuss in Section~\ref{sec:ext-ring}.

While the BGP core of the query can be solved efficiently with a wco algorithm, conditions on properties and labels have been typically handled outside of this wco algorithm by pre- or post-filtering: 
we can either filter all nodes and edge values $v$ to leave for consideration only those holding $a \le v.\pi \le b$ (or where label $v.\ell$ exists), or first solve the BGP and filter every output, removing bindings of $v$ that do not satisfy the conditions. If we use LTJ, a better solution is to adapt it so as to post-filter the values $v$ as soon as they come out of the intersection of loci children, thereby pruning branches earlier.

However, as with non-wco algorithms, these approaches may force us to examine many more solutions than necessary. Consider, for the graph of Fig.~\ref{fig:tries}, the BGP $\{ (x,\textsf{\em lives},\textsf{Europe}), (x,\textsf{\em works},\textsf{CS}) \}$ complemented with the condition $x.\textsf{age} \ge 50$, aimed at finding senior people living in Europe and working in CS. LTJ will choose two loci, say in \texttt{POS} (it could also be \texttt{OPS}), by descending through $(\textsf{\em lives},\textsf{Europe})$ and $(\textsf{\em works},\textsf{CS})$, respectively. The intersection of the loci children yields all people $x$ in Europe working in CS. If we pre-filter by age on a large real graph of this form, we will generate a lot of senior people, very few of which will end up passing the BGP filter. If we post-filter, we will generate a lot of people in Europe working in CS, very few of which will pass the seniority filter. None of those methods meet the alternation complexity bound, which is never asymptotically larger than the minimum of both sets, and can be much smaller.

\section{Leaptors: Incorporating Filters into LTJ} \label{sec:leaptors}

We generalize the logic of LTJ so as to handle not only triple patterns mapped to trie loci, but in general {\em leaptors} (these extend ``iterators'', which can only deliver the elements sequentially). A leaptor represents a set $S$ with a total order and supports operation {\em leap$(c) = \min \{ c' \in S \cup \{+\infty\},~ c' \ge c\}$}, in time polylogarithmic on $|S|$. An example leaptor is the algorithm that finds the child of a trie node (the locus) with the smallest value $c'\ge c$; this can be implemented, say, via exponential search on the increasing sequence of child values.

\subsection{Properties and ranges of values}

We now define a new leaptor to incorporate filters on property values. Let $\U$ be the set of node identifiers in the graph (we defer properties and labels on edges to Section~\ref{sec:pg}, though they could be supported similarly), and $\A_{\pi}$ be the set of values a property $\pi$ takes in the graph. Assume that both sets are integers; if not, we map them to integer ranges $[1,|\U|]$ and $[1,|\A_{\pi}|]$ while preserving order. For example, by storing an array of all the node values, and another of all the values in $\A_{\pi}$, we map actual values to integers with a binary search (in $O(\log N)$ time, only once per query), and can map back any result in constant time. 

We now define a {\em discrete grid} $M_{\pi}$ of $[1,|\U|] \times [1,|\A_{\pi}|]$, and set a point at $(v,a)$ in $M_{\pi}$ for every node $v$ having property $\pi$ with value $a$. This supports $v$ having zero-to-many values for a property $\pi$. Query {\em leap$(c)$} on the leaptor for $a \le x.\pi \le b$ is implemented on $M_{\pi}$ via the orthogonal range query {\em leftmost$([c,+\infty],[a^+,b^-])$}, which finds a point with minimum $x$-value $c' \ge c$ whose $y$-value is within $[a^+,b^-]$ ({\em leap$(c)$} returns $+\infty$ if this range is empty). Here, $a^+$ is the smallest value $a^+ \ge a$ in $\A_{\pi}$ and $b^-$ is the largest value $b^- \le b$ in $\A_{\pi}$; both are found in $O(\log |\A_{\pi}|) \subseteq O(\log N)$ time via binary searches (done only once per query). The geometric query itself is called an {\em orthogonal range successor query}, and can be solved in time $O(\log^\epsilon N)$ for any constant $\epsilon>0$, using space linear in the number of points in $M_{\pi}$ \cite{NNswat12}. This space is proportional to the database size, which includes one unit of space per value each property takes for each node. The leaptor time is then always in $O(\log N)$. See Fig.~\ref{fig:grid}.

\begin{figure}[t]
\centering

\begin{tikzpicture}
\begin{axis}[
    width=6.8cm,
    height=5cm,
    symbolic x coords={Alice, Bob, Carla, Diego, Emily},
    xmin=Alice, xmax=Emily,
    enlarge x limits={abs=0.6cm}, 
    xtick={Alice, Bob, Carla, Diego, Emily},
    ymin=0, ymax=70,
    ylabel={Age},
    xlabel={Nodes},
    grid=both,
    grid style={dashed, gray!15},
    tick label style={font=\footnotesize\sffamily},
    label style={font=\footnotesize\bfseries},
    scatter,
    only marks,
    mark=*,
    mark size=1.8pt,
    mark options={fill=blue!70},
    axis on top
]

\fill[orange!15] (axis cs:Carla, 50) rectangle ({rel axis cs:1,0} |- {axis cs:Emily,70});

\addplot[blue!70, nodes near coords, every node near coord/.append style={font=\tiny, yshift=1pt, gray}] 
coordinates {
    (Bob, 55)
    (Carla, 45)
    (Diego, 60)
    (Emily, 25)
};
\end{axis}
\end{tikzpicture}

\caption{The grid $M_{\mathsf{age}}$ for the graph of Fig.~\ref{fig:tries} (only showing person nodes). The shadowed area corresponds to the query $\mathit{leftmost}([\textsf{Carla},+\infty] \times [50,+\infty])$.}
\label{fig:grid}
\end{figure}

Let us return to our example of Section~\ref{sec:beyond-rdf} to illustrate the power of our method. Apart from the two leaptors implemented on tries for the triple patterns, we have the one implemented with grids for the condition $x.\textsf{age} \ge 50$. Once a candidate $c$ passes the filters of the two loci, we find the leftmost point $(c',a)$ in $[c,+\infty] \times [a^+,b^-]$. This will yield the next senior person $c' \ge c$, and then the next person in Europe working in CS will be sought after $c'$, not after $c$. In the example, the first candidate coming out of the intersection of children of trie loci $(\textsf{\em lives},\textsf{Europe})$ and $(\textsf{\em works},\textsf{CS})$ is \textsf{Bob}. The grid leaptor also returns \textsf{Bob} for the query $\leap(\mathsf{Bob}) = \mathit{leftmost}([\mathsf{Bob},+\infty]\times[50,+\infty])$, and thus \textsf{Bob} is an output of the query. If we now start from the grid to get the next candidate, its leaptor returns \textsf{Diego} (see the shaded area in Fig.~\ref{fig:grid}), which lets us discard \textsf{Carla} already. 

This process achieves the alternation complexity, visiting fewer candidates than pre- or post-filtering, and potentially not fully traversing either set (those passing the BGP and those passing the age filter). For example, if the person ids happen to be clustered by country, occupation, or age, the alternation complexity can be much lower than the number of candidates processed by standard filtering.

\subsection{Labels} \label{sec:labels}

We also define leaptors for labels. While labels can be handled as a special property (giving them value $0$ when the label exists, and asking for $\mathit{leftmost}([c,+\infty]\times[0,0])$ on their grid), a faster leaptor uses a {\em successor} data structure \cite{PT06} storing the set $D_\ell$ of node ids having label $\ell$. The structure can find the smallest integer $\ge c$ in $D_\ell$ (or return $+\infty$ if there is none) in time $O(\log\log |\U|)$, using $O(|D_\ell|)$ space. The leaptor for label $\ell$ then simply implements $\leap(c) = \mathit{successor}(D_\ell,c)$. 

It is also generally possible to ask for nodes having a property $\pi$, independently of its value. The corresponding leaptor can, again, be simulated with query $\mathit{leftmost}([c,+\infty]\times[-\infty,+\infty])$ on $M_\pi$, but within linear space we can also store a successor data structure on the set $D_\pi$ of the node ids having some value for property $\pi$, and run $\leap(c)$ in time $O(\log\log |\U|)$.

\subsection{ERing: A succinct wco index handling properties and labels}
\label{sec:ext-ring}

The Ring \cite{arroyuelo2024ring} is a {\em succinct} index that handles BGPs in wco time on RDF graphs. An index is said to be succinct if, including the data, it uses $\mathcal{S}+o(\mathcal{S})$ space, where $\mathcal{S}$ is the size of a plain representation of the data. The Ring represents the set of triples $(s,p,o)$ with three sequences, $\Lo$, $\Lp$, and $\Ls$. Sequence $\Lo$ stores the objects of the $N$ triples sorted by the order \texttt{SPO}, that is, triples $(s,p,o)$ are sorted lexicographically and $\Lo$ collects the sequence of their $o$ components. Analogously, sequence $\Lp$ collects the $N$ predicates $p$ in the order \texttt{OSP} and sequence $\Ls$ collects the $N$ subjects $s$ in the order \texttt{POS}. Overall, letting $\U$ be the universe of nodes and $\LL$ that of labels (or predicates), the Ring uses $(2N\log_2|\U|+N\log_2|\LL|) (1+o(1))$ bits, which is the space needed to store the $N$ triples $(s,p,o)$ in plain form, plus a sublinear redundant space. See Fig.~\ref{fig:ring}.

The sequences $L_*[1,N]$ are represented with a data structure called a {\em wavelet tree} \cite{GGV03,Nav14}, which on a universe $\Sigma$ (where $\Sigma$ is $\U$ or $\LL$) uses $N\log_2|\Sigma|(1+o(1))$ bits and in particular implements the operation $rank_c(L_*,i)$, the number of times $c$ occurs in $L_*[1,i]$, in time $O(\log|\Sigma|)$. We also represent arrays $\Ao$, $\Ap$, and $\As$, where $A_*[1,|\Sigma|]$ holds in $A_*[e]$ the number of positions in $L_*$ with values less than $e$. Those arrays can be represented with bitvectors $B_*[1,N]$ where $B_*[A_*[e]]=1$ for all $e$ and $0$ otherwise, so $A_*[e]=select_1(B_*,e)$, where  $select_b(B_*,e)$ is the position of the $e$th bit $b$ in $B_*$. Bitvector representations solving $select_b$ in constant time \cite{Cla96,Mun96} require $N+o(N)$ bits of space, which fits within the sublinear extra space of the Ring.

\begin{figure}[t]
\centering
\begin{tikzpicture}[
    font=\sffamily\fontsize{7.5pt}{8pt}\selectfont,
    array_style/.style={
        matrix of nodes,
        nodes={
            draw=none, 
            minimum width=1.05cm, 
            minimum height=0.35cm, 
            inner sep=0pt, 
            anchor=center
        },
        column sep=1pt,
        row sep=0pt,
        nodes in empty cells
    },
    header_l/.style={font=\bfseries\tiny, text=blue!80!black},
    connection/.style={-{Stealth[scale=0.8]}, thick, red!70, opacity=0.7, bend left=12},
    highlight/.style={draw=orange, thick, rounded corners=2pt, inner sep=0pt}
]

    \matrix (SPO) [array_style] {
        Alice & \itshape lives & Europe \\ 
        Alice & \itshape works & Finance \\
        Bob   & \itshape lives & Europe \\
        Bob   & \itshape works & CS \\
        Carla & \itshape lives & Europe \\
        Carla & \itshape works & CS \\
        Diego & \itshape lives & America \\
        Diego & \itshape works & CS \\
        Emily & \itshape lives & America \\
        Emily & \itshape works & Finance \\
    };
    \node[above=0.1cm of SPO] {\texttt{SPO} order};
    \node[header_l, above=0.05cm of SPO-1-3] {$\Lo$};

    \matrix (OSP) [array_style, right=0.6cm of SPO] {
        America & Diego & \itshape lives \\
        America & Emily & \itshape lives \\
        CS      & Bob   & \itshape works \\ 
        CS      & Carla & \itshape works \\
        CS      & Diego & \itshape works \\ 
        Europe  & Alice & \itshape lives \\ 
        Europe  & Bob   & \itshape lives \\
        Europe  & Carla & \itshape lives \\
        Finance & Alice & \itshape works \\
        Finance & Emily & \itshape works \\
    };
    \node[above=0.1cm of OSP] {\texttt{OSP} order};
    \node[header_l, above=0.05cm of OSP-1-3] {$\Lp$};

    \matrix (POS) [array_style, right=0.6cm of OSP] {
        \itshape lives & America & Diego \\
        \itshape lives & America & Emily \\
        \itshape lives & Europe  & Alice \\ 
        \itshape lives & Europe  & Bob \\
        \itshape lives & Europe  & Carla \\
        \itshape works & CS      & Bob \\   
        \itshape works & CS      & Carla \\
        \itshape works & CS      & Diego \\
        \itshape works & Finance & Alice \\
        \itshape works & Finance & Emily \\ 
    };
    \node[above=0.1cm of POS] {\texttt{POS} order};
    \node[header_l, above=0.05cm of POS-1-3] {$\Ls$};

    \begin{scope}[on background layer]
        \foreach \m in {SPO, OSP, POS} {
            \fill[blue!8] (\m-1-3.north west) rectangle (\m-10-3.south east);
        }
        
        \node[highlight, fit=(SPO-6-1) (SPO-6-3)] {};
        
        \node[highlight, fit=(POS-6-1) (POS-10-3)] {};
    \end{scope}

    \draw[connection] (SPO-1-3.east) to (OSP-6-1.west);
    \draw[connection] (OSP-6-3.east) to (POS-3-1.west);

\end{tikzpicture}

\caption{A Ring structure on the graph of Fig.~\ref{fig:tries}, assuming the node id order is the lexicographic order of the strings. The Ring only stores the three shaded sequences. The arrows show how the triple $(\texttt{Alice},\textsf{\em lives},\texttt{Europe})$ is tracked across the three orders. The rectangle on \texttt{POS}, corresponding to the child of the trie root by \textsf{\em works}, descends by \texttt{Carla} to the rectangle on \texttt{SPO}.}
\label{fig:ring}
\end{figure}

With those structures we can {\em track} triples across the arrays $L_*$: the triple at $\Lo[i]$ has object $o=\Lo[i]$, and it corresponds to $\Lp[i']$, where $i' = \Ao[o] + rank_o(\Lo,i)$. The triple then has predicate $p = \Lp[i']$, and it corresponds to $\Ls[i'']$, where $i'' = \Ap[p] + rank_p(\Lp,i')$. The subject of the triple is then $s=\Ls[i'']$ and if we compute $\As[s] + rank_s(\Ls,i'')$ we return to $\Lo[i]$.

Trie nodes correspond to ranges in some sequence $L_*$. A locus in \texttt{SOP} or \texttt{SPO}, reached by descending by $s$, corresponds to the range $\Lo[i,j] = \Lo[\As[s]+1,\As[s+1]]$ of the triples $(s,*,*)$. If we now descend by $o$ in \texttt{SOP}, the resulting locus corresponds to range $(s,o,*)$ in the order \texttt{SOP}, which we do not represent, but also to range $(o,s,*)$ in the order \texttt{OSP}, which we do represent in $\Lp$. This is $\Lp[i',j']$, where $i'=\Ao[o]+rank_o(\Lo,i-1)+1$ and $j' = \Ao[o] + rank_o(\Lo,j)$. If, instead, we descend by $p$ after $s$ in \texttt{SPO}, we stay in the same sequence $\Lo$, which represents that trie, now restricting the range to the triples $(s,p,*)$. The new range is computed by descending from the root of \texttt{POS} with $p$, and from the resulting range $\Ls[i'',j'']$ descending by $s$ to $\Lo[i''',j''']$, with $i'''=\As[s]+rank_s(\Ls,i''-1)+1$ and $j''' = \As[s] + rank_s(\Ls,j'')$. We will still call those operations ``descending'' by a value, even if they are mapped to ranges in $L_*$. See again Fig.~\ref{fig:ring} for examples of tracking and descending.

Descending operations allow the Ring to simulate the LTJ algorithm by mapping loci to ranges in $L_*$. The Ring leaptor is implemented as follows. Say we are again in $\Lo[i,j]$ after descending by $s$ in trie \texttt{SOP}. To implement {\em leap$(c)$} on the values of $o$, we must find the smallest value in $\Lo[i,j]$ that is $\ge c$. This operation, called {\em range next value}, is supported in $O(\log|\Sigma|)$ time by the wavelet tree \cite{GNPtcs11}. If, instead, we are in trie \texttt{SPO} and want to find the smallest value of $p$ that is $\ge c$, we start at range $\Ls[\Ap[c]+1,N]$ (the range of all predicates $\ge p$ in order \texttt{POS}), and compute $\Lo[i'] = \As[s] + rank_s(\Ls,\Ap[c])+1$. Tracking $\Lo[i']$ to $\Lp$ yields the desired value of $p$. The Ring then implements the leaptors in time $O(\log|\Sigma|)$, yielding a total query time complexity of $O(Q^* \cdot |Q|\log|\Sigma|) \subseteq \tilde{O}(Q^*)$.

\paragraph{Incorporating properties and labels.}
We add to the Ring succinct representations of grids $M_{\pi}$ that use $N_{\pi} \log_2 |\A_{\pi}| (1+o(1))$ bits, where $N_{\pi}$ is the sum of the number of values property $\pi$ has over all nodes.  This is again the size of the data plus a sublinear term. Such a succinct representation is the wavelet tree of the sequence $S_\pi[1,N_\pi]$ of values property $\pi$ takes, sorted by increasing identifiers of the nodes. We also represent a bitvector $B_{\pi}[|\U|+N_{\pi}]$ where, for each node identifier $v$, we append a $1$ and then as many $0$s as values $v$ has for property $\pi$; note that positions in $S_\pi$ align with $0$s in $B_\pi$. The grid of Fig.~\ref{fig:grid}, for example, would be represented with the sequence $S_{\textsf{age}} = \langle 55,45,60,25\rangle$ and the bitvector $B_{\textsf{age}}= 110101010$ (considering only the node ids that appear in the figure). 

The bitvector representation of $B_\pi$ \cite{Cla96,Mun96}  uses $(|\U|+N_\pi)(1+o(1))$ bits of space (thus fitting within the sublinear extra space of the index) and supports queries $select_b$ in constant time. 
We thus compute with $c' \gets select_1(B_{\pi},c)-c+1$ the first position in $S_\pi$ of node identifiers $\ge c$. The wavelet tree then solves the orthogonal range successor query from $S_\pi[c'..]$, in time $O(\log|\U|)$ \cite{BCNic13}. The answer $c''$ is finally converted back to an identifier with $select_0(B_\pi,c'')-c''$. 
In our example, to compute $\mathit{leftmost}([\textsf{Carla},+\infty] \times [50,+\infty])$ on $M_{\textsf{age}}$, where $\textsf{Carla}=3$, we start with $c' \gets select_1(B_{\textsf{age}},3)-3+1=2$, so we find the first value $\ge 50$ on $S_{\textsf{age}}[2..]$. This is at $S_{\textsf{age}}[3]=60$. We finally convert that position, $c''=3$, back to the identifier $select_0(B_{\textsf{age}},3)-3 = 4=\textsf{Diego}$.

Queries on labels were handled with generic successor data structures in Section~\ref{sec:labels}. While using linear space, this representation is not succinct: If a label $\ell$ is assigned to $m$ elements out of $n$, a succinct representation uses space close to the bits required to specify those $m$ elements, that is, $\log_2 {n \choose m} = m\log\frac{n}{m}+O(m)$ bits. To match that space, we store one bitvector $B_\ell$ for each label $\ell$, with one bit per node, marking with $1$ the nodes having label $\ell$. A {\em sparse} bitvector representation \cite[Sec.~4.4]{Nav16}, if $m$ out of $n$ bits are $1$, uses $m\log\frac{n}{m}+2m$ bits, and supports $rank_b$ and $select_0$ in time $O(\log m)$ and $select_1$ in constant time. Operation {\em leap$(c)$} for the condition $x.\ell$ is then implemented as $select_1(B_\ell,rank_1(B_\ell,c-1)+1)$ with this leaptor, in time $O(\log m)$ (which is logarithmic on the data size, and thus $\tilde{O}(1)$). This representation also provides a leaptor for the next element {\em not} having a label $\ell$, with $\leap(c) = select_0(B_\ell,rank_0(B_\ell,c-1)+1)$, also in time $O(\log m)$.

\paragraph{The ERing.}
The resulting structure, which we call {\em Extended Ring (ERing)}, is succinct on the triples, properties and labels it represents. It implements LTJ and supports leaptors for filtering by the existence of node properties and labels, and by restricting the values of node properties to ranges. The range limits can be constants or other property values. When a range involves several variables, as in $x.\textsf{age} \le y.\textsf{age}$, the leaptor is enabled only when binding the last variable ($x$ or $y$ in our example); by then, the condition has become a range with constants.

\subsection{Worst-case optimality}

The concept of worst-case optimality for BGPs (i.e., intersections of triple patterns) is well-defined for the RDF model. Node properties $\pi$ can be expressed in RDF by creating a node for each literal value $a \in \A = \cup_\pi \A_\pi$, and represent $v.\pi = a$ with a triple $(v,\pi,a)$. For labels we can use a reserved property \textsf{\em lab} and represent $v.\ell$ with $(v,\textsf{\em lab},\ell)$. Conditions like $x.\pi = a$ or $x.\ell$ are then expressed as triple patterns $(x,\pi,a)$ or $(x,\textsf{\em lab},\ell)$. Range conditions $a \le x.\pi \le b$ can be translated into $\{ (x,\pi,z), R(a,z), R(z,b)\}$ with variables $x$ and $z$. The (virtual) table $R(i,j)$ contains all the pairs $i,j \in \A$ such that $i \le j$. The range condition is then simulated by making node ids respect the order of the literals in $\A$ and using a simple leaptor for $z$: $\leap(c) = \max(a,c)$ if $c \le b$ and $+\infty$ otherwise. 

Mapping properties and labels into plain RDF and using the standard Ring~\cite{arroyuelo2024ring} is then wco, but not succinct: it creates one triple per node label or property value. We show that our ERing, instead, is a succinct-space wco solution.

\begin{definition}
A solution to BGPs extended with clauses for the existence of properties and labels, and comparing property values, is {\em worst-case optimal} if its time complexity is within the AGM bound of the corresponding query mapped onto the RDF model, as described above.
\end{definition}

\begin{theorem}
The ERing is worst-case-optimal when solving extended BGPs.
\end{theorem}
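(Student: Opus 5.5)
We would prove the theorem by simulation. Take an extended BGP $Q$ and its RDF translation $Q'$, as described before the definition: each label condition $x.\ell$ becomes $(x,\textsf{\em lab},\ell)$, each equality $x.\pi=a$ becomes $(x,\pi,a)$, and each range condition $a\le x.\pi\le b$ becomes $\{(x,\pi,z),R(a,z),R(z,b)\}$ with a fresh variable $z$. LTJ run on $Q'$ with any VEO works in time $\tilde{O}(Q'^*)$, using the Ring leaptors plus the trivial leaptor for $z$. We will argue that the ERing performs a run of LTJ on $Q'$ whose search tree is essentially the same, node by node, with each step costing only polylogarithmic time. Then its total time is $\tilde{O}(Q'^*)$, which is exactly the definition of worst-case optimality.

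\textbf{Step 1: fix the VEO.} For each range condition we place the auxiliary $z$ immediately after $x$ in the VEO of $Q'$. Choosing the VEO this way is allowed because the standard LTJ analysis gives the $\tilde{O}(Q'^*)$ bound for every VEO, since every level's work is charged to the AGM bound of a projected subquery. For conditions $x.\pi\le y.\pi'$, $z$ is placed after the later of the two variables, mirroring the rule that the ERing enables such a leaptor only once the last variable is bound.

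\textbf{Step 2: show the ERing leaptor is a compressed version of the two LTJ levels.} When LTJ eliminates $x$ in $Q'$, the leaptor $(x,\pi,z)$ offers every $x$ having some value of $\pi$, and the subsequent level for $z$ prunes the branches with no value in $[a,b]$. The ERing grid leaptor instead returns directly the next $x$ that has some value in $[a^+,b^-]$, so its candidates for $x$ are a subset of LTJ's. A branch that LTJ kills at the $z$ level is never opened by the ERing, and a branch that survives in LTJ corresponds to a surviving ERing branch. Label and existence conditions map one-to-one onto the triple patterns $(x,\textsf{\em lab},\ell)$ or $(x,\pi,*)$. Consequently the sequence of leap calls the ERing issues at each level is a subsequence of, or charged injectively to, the leap calls of LTJ on $Q'$. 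At a given level, the alternation bound for an intersection of sets that are each subsets of LTJ's sets is at most LTJ's work on the same level plus the work on the following level.

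\textbf{Step 3: charge the costs.} Every ERing leaptor runs in $O(\log N)$ time. This covers the Ring leaptors and the wavelet-tree range successor on $S_\pi$, which costs $O(\log|\U|)$, and the sparse bitvector operations, which cost $O(\log m)$. Multiplying by $|Q|$ and the number of leap calls gives $O(Q'^*\cdot|Q|\log N)\subseteq\tilde{O}(Q'^*)$. Succinctness was already established in the preceding paragraphs, so it needs no further argument here.

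\textbf{Main obstacle.} The hard step is making the charging in Step 2 rigorous. The ERing collapses two levels of the search, one for $x$ and one for $z$, into a single leap, and a property may have multiple values per node. A node $x$ with many values of $\pi$ corresponds to many $z$-branches in LTJ but to a single ERing candidate, and that direction only helps the ERing. The delicate part is the opposite direction: we must check that a single grid leap, which may skip many LTJ candidates $x$, never costs more than $O(\log N)$. This holds by the range-successor bound. We would make it precise by charging each ERing leap either to an LTJ leap at the $x$ level or to the LTJ leap at the $z$ level that rejected the same candidate.
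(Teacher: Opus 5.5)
Your proposal is correct and follows essentially the same route as the paper: dominate the ERing by the standard Ring/LTJ on the RDF translation, using a VEO that binds each auxiliary $z$ right after its $x$ (the paper uses the order $x,z,y,z'$ for $x.\pi\le y.\pi'$). The argument then uses that filtering $x$ and $z$ simultaneously is never worse in alternation complexity than filtering them one after the other, and that every leap costs polylogarithmic time. The per-leap injective charging you flag as the main obstacle is not needed: if $D'_\pi\subseteq D_\pi$ is the set of nodes with a $\pi$-value in $[a,b]$, then any partition certificate for $S_1\cap\dots\cap S_k\cap D_\pi$ is also one for $S_1\cap\dots\cap S_k\cap D'_\pi$. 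This is the precise form of the paper's one-line justification.
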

\begin{proof}
It suffices to observe that the ERing's time complexity is not worse than the standard Ring's \cite{arroyuelo2024ring} on the corresponding RDF graph, with a certain VEO: for each range condition $a \le x.\pi \le b$, translated into $\{ (x,\pi,z), R(a,z), R(z,b)\}$, we bind $z$ immediately after $x$. In the ERing, the potential values of $x$ and $z$ are filtered simultaneously, which per the alternation complexity is never worse than filtering one after the other. With the standard Ring, each potential value of $x$ will be instantiated and only then filtered by $z$. The conditions on the existence of properties $x.\pi$ are particular cases, $-\infty \le x.\pi \le +\infty$.

Simpler conditions like $(x,\pi,a)$ or $(x,\textsf{\em lab},\ell)$ are filtered optimally by the Ring, by finding the locus of the path $(\pi,a)$ or $(\textsf{\em lab},\ell)$ in the trie \texttt{POS}: later, when binding $x$, this locus filters the corresponding values. ERing achieves the same by adding, at the time of binding $x$, the leaptor $\textit{leftmost}([c,+\infty],[a,a])$ on $S_\pi$ for properties or the leaptor $select_1(B_\ell,rank_1(B_\ell,c-1)+1)$ for labels.

Conditions that compare properties, like $x.\pi \le y.\pi'$, are translated into $\{ (x,\pi,z), (y,\pi',z'), R(z,z')\}$. In this case, the ERing eliminates (say) $x$ before $y$, and only during the elimination of $y$ it enforces $z \le z'$ (where $z = x.\pi$ is by then a constant). This corresponds to, and is never worse than, eliminating variables in the order $x$, $z$, $y$, $z'$ with the standard Ring.

Since the ERing solution is not worse than the standard Ring solution with some VEO, and it implements all leaps in time $\tilde{O}(1)$, it is wco. \qed
\end{proof}


The ERing can be better than the (standard) Ring (mapped to RDF): it can filter simultaneously what the Ring can only filter sequentially, which is equivalent to pre- or post-filtering. On the other hand, the Ring can use VEOs that are not available to the ERing: consider conveniently translating the clause $x.\pi = y.\pi$ to $\{ (x,\pi,z), (y,\pi,z)\}$, where binding $z$ first might be advantageous.

Conversely, one could represent RDF triples using properties and labels to enable simultaneous filtering over RDF: triples $(u,p,v)$ where $v$ is a literal could map to properties on $u$ ($u.p = v$), while triples $(u,\textsf{\em type},v)$ could map to labels on $u$ ($u.\ell = v$). Our model already supports zero-or-more labels and property values per node, but would require some amendments to cover RDF properties mixing datatype and node values, joins on predicates, and other such details.


\section{Property Graphs} \label{sec:pg}

In the previous section we described techniques for efficiently handling filters on properties and labels. We now apply these techniques for a concrete data model and query language, namely property graphs and a fragment of GQL. Some details of this concrete setting (e.g., edge identifiers, Boolean label expressions) require additional care, while others (e.g., one edge label, at most one value for a property on a given node or edge) provide opportunities to optimize further.

\subsection{Data model and query language} \label{sec:pg-model}

A {\em property graph} is a directed labeled multigraph where any node or edge may feature properties and labels. 
 Let $\G(\N,\E)$ be a property graph with $\N \cap \E = \emptyset$. Let $\PP$ be a set of properties, and $\A$ a set of atomic values properties can take (e.g., integers, dates, strings). Each node or edge has zero or more properties, but a given property has at most one value for a given node or edge. Nodes have zero or more labels in $\LL_N$ and edges have exactly one label in $\LL_E$ ($\LL_N \cap \LL_E = \emptyset$). 
\no{Formally:

\begin{definition}
A property graph is a tuple $\G=(\N, \E, \rho, \lambda,  \sigma)$ where:
\begin{enumerate}
    \item $\rho: \E \to \N \times \N$ is a total function that associates each edge in $\E$ with a source and a target node in $\N$.
    \item $\lambda : (\N \to 2^{\LL_N}) \cup (\E \to 2^{\LL_E})$ is a partial function that associates a node/edge with a set of labels in $\LL_N$/$\LL_E$, with $\LL_N \cap \LL_E = \emptyset$. For every $e \in \E$, $|\lambda(e)|=1$. 
    \item $\sigma : (\N \cup \E) \times \mathcal{P} \to \A$ is a partial function that associates nodes/edges with properties from $\mathcal{P}$, and gives values in $\A$ to those properties.
\end{enumerate}
\end{definition}
}

Our graph of Fig.~\ref{fig:tries} is a property graph, with node property \textsf{age} and edge labels \textsf{\em lives} and \textsf{\em works}. Possible edge properties could be, for example, a date \textsf{since} (for edges with label \textsf{\em lives}) or an integer \textsf{salary} (for edges with label \textsf{\em works}). Possible labels for nodes (representing persons) could be \textsf{\em person} and \textsf{\em has-phd}. 


Two main standards exist to query property graphs: SQL/PGQ and GQL~\cite{francis2023researcher}; their common graph pattern matching language is called GPML~\cite{DeutschFGHLLLMM22}. A GPML query is of the form \texttt{MATCH patterns}, where \texttt{patterns} specifies the subgraph to match as a set of {\em edge patterns}, optionally augmented with a \texttt{WHERE} clause to filter the results. 
We now describe the fragment we support  \cite{DeutschFGHLLLMM22,angles2018property,francis2023researcher}; the formal syntax and semantics are given in  Appendix~\pubext{A~\cite{apps}}{\ref{sec:semantics}}.

The most basic edge patterns are of the form $\mathtt{(s) \to (o)}$, where \texttt{s} and \texttt{o} can be constants in $\N$ or node variables from a set $\V_N$. Those are analogous to triple patterns but do not specify any label. Conditions on labels are associated with node variables, for example
we can add a clause \texttt{(x:person)} to specify that $x \in \V_N$ can match only nodes having the label $\textsf{\em person} \in \LL_N$. \camready{Conditions with no variable names, like \texttt{(:person) $\to$ (y)}, are also permitted.} We can include these clauses in edge patterns, for example in \texttt{(x:person) $\to$ (y)}.
Label conditions may consist of expressions including conjunctions (\texttt{\&}), disjunctions (\texttt{|}), negations (\texttt{!}), and groups of labels. For example, a clause \texttt{(x:person\&has-phd)} admits only the nodes with both labels, \textsf{\em person} and \textsf{\em has-phd}.

Label conditions can also be given on edge patterns, as in $\mathtt{(s) - [\texttt{:works}]}$ $\mathtt{\to (o)}$, which allows matching only edges whose label is $\textsf{\em works} \in \LL_E$; expressions on labels can also be used. It is also possible to give variable names, from a set $\V_E$ disjoint from $\V_N$, to edges: 
$\mathtt{(s) -[e]\rightarrow (o)}$ gives variable name $e \in \V_E$ to the edge; we can also incorporate expressions as in $\mathtt{(s) -[e:works]\rightarrow (o)}$.

The \texttt{WHERE} clause filters the results by restricting the existence or the values of some of their properties or by comparing the identifiers of edges/nodes (yet we cannot compare edge variables with node variables). Conditions can be combined with \texttt{AND}, \texttt{OR}, and \texttt{NOT}. Comparisons are of the form \texttt{(A op B)}, where $\texttt{A}$ and $\texttt{B}$ are constants or variable property values $x.\pi$ of the same type, and $\texttt{op} \in \{ =, \neq, <, \le, \ge, > \}$. We can directly compare node or edge identifiers as if they were properties, as in 
\texttt{(x<=y) AND (y<=z)}, which avoids reporting the same tuples $(x,y,z)$ in every possible order. One can also state that a variable must have, or not have, some property using \texttt{(x.since IS NOT NULL)} or \texttt{(x.since IS NULL)}. 

A final \texttt{RETURN} clause projects the variable or property values to return. We use {\em bag semantics}, so repetitions caused by projection are not removed. 

The following query, for example, looks for names and ages of senior people in Europe working in CS that started in their job before 2010 or have no phd:
\begin{Verbatim}[fontsize=\footnotesize]
   MATCH (x:person)-[:lives]->(Europe), (x)-[e:works]->(CS)
   WHERE (x.age >= 50) AND ((e.since < 01/01/2010) OR (!x.has-phd))
   RETURN x.name, x.age
\end{Verbatim}




\subsection{PG-Ring: A succinct index for property graphs} \label{sec:pg-ring}

We build on the ERing described in Section~\ref{sec:ext-ring} to define PG-Ring, which indexes property graphs within succinct space. 

\camready{For compatibility with the ERing, we create new variables where edge patterns omits them; those are naturally projected out by the \texttt{RETURN} clauses.}


Since the edges of property graphs have exactly one label in $\LL_E$, we treat edges $e = s \to o$ with label $\ell$ as triples $(s,\ell,o)$ in order to create the sequences $L_*$. The symbols of $\Ls$ and $\Lo$ then range over $\N$ and those of $\Lp$ range over $\LL_E$. The wavelet tree of $\Lp$ uses $N\log_2|\LL_E|(1+o(1))$ bits, which is succinct.

Storing the additional edge identifiers of property graphs would not be succinct, so we rather exploit the fact that such identifiers are internal by defining the identifier of an edge as its index in the order \texttt{POS}. We choose this index order since all edge identifiers with some label $\ell$ are precisely in the range $[\Ap[\ell]+1,\Ap[\ell+1]]$, and edge patterns of the form $\texttt{(s)} - [:\ell] \to \texttt{(o)}$ are treated as triple patterns with a constant predicate, instead of using a leaptor.

\camready{Label expressions $L$ on edges,} $\texttt{(s)} - [:L] \to \texttt{(o)}$, are converted into a set of $O(L)$ disjoint ranges: the complement of $\ell$ becomes $\{ [1,\Ap[\ell]],[\Ap[\ell+1]+1,N]\}$, disjunction becomes union of ranges, and conjunction becomes intersection. We then maintain, in general, a set of disjoint ranges during query processing, not just one. For simplicity we consider a single range in the discussion that follows.

\paragraph{Leaptors for edge variables.}

Variables in $\V_E$ can be assigned to edge identifiers. Those will be handled like any other variable in LTJ, with the particularity that their identifiers are not explicit in any sequence $L_*$, but implicit as indices in $\Ls$ (i.e., on order \texttt{POS}). 
The leaptor for a variable $z \in \V_E$ -- which is mentioned on an edge pattern $t$ represented by a trie $\tau_t \in T_z$ -- will thus implement $\leap(c)$ differently depending on which sequence $L_*$ holds the range that represents the locus $v_t$. See Appendix~\pubext{B~\cite{apps}}{\ref{app:pg-ring}} for details.

\paragraph{Leaptors for node properties and labels.}

\camready{Some simplifications with respect to the treatment in Section~\ref{sec:ext-ring} are possible (see Appendix~\pubext{B~\cite{apps}}{\ref{app:pg-ring}}).
For label expressions on nodes, we can handle the complement of a single label $\ell$ with the leaptor for not having a label we described in Section~\ref{sec:ext-ring}. A leaptor for a label expression $L$ builds a syntax tree where negations are pushed down to the leaves and {\em leap$(c)$} is solved by recursively calling {\em leap$(c)$} on both children of internal nodes, taking the minimum returned value $c'$ on disjunctions. For conjunctions, we create multiary nodes containing all the subexpressions that descend by conjunction nodes, and intersect all children as a set, just as for our intersection algorithm. This  matches the alternation complexity of the problem with an extra penalty factor of $O(|L| \log N) \subseteq \tilde{O}(1)$. Bitvectors $B_\pi$ of properties are used in the same way as bitvectors $B_\ell$ of labels, to filter nodes having or not having property $\pi$.}

\paragraph{Boolean leaptors.}

\camready{We use a similar syntax tree, with multiary conjunction nodes, to process together the \texttt{MATCH} edge patterns and the \texttt{WHERE} clauses. This generalizes our set intersections by allowing disjunction nodes (to represent \texttt{OR} in \texttt{WHERE} clauses), which leap by taking the least of their children's leap values. Negations are pushed to the leaves and applied to the atomic condition, both for properties (e.g., \texttt{NOT (x < y)} becomes \texttt{(x >= y)}) and for labels (e.g. \texttt{NOT (x.person)} becomes \texttt{(!x.person)}). See Fig.~\ref{fig:syntax}, and Appendix~\pubext{B~\cite{apps}}{\ref{app:pg-ring}} for details.}

\begin{figure}[t]
\includegraphics[width=\textwidth]{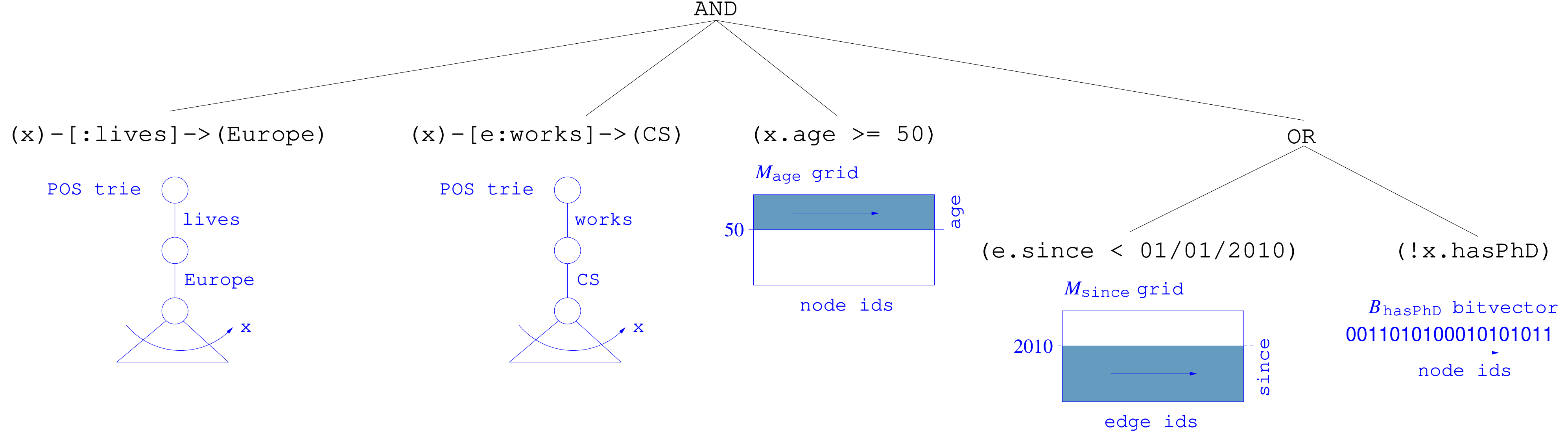}
\caption{The syntax tree of our example of Section~\ref{sec:pg-model}. Each internal node implements a Boolean leaptor; leaf nodes implement trie, grid, and bitvector leaptors.}
\label{fig:syntax}
\end{figure}

\no{
\subsection{Texts and complex data types \GN{(off-topic? tal vez un apéndice?)}}

We have considered simple scalar data types for property values: integers, strings, dates, and so on, which are, or are mapped to, integer ranges that can be represented as coordinates in our grids $M_\pi$. Ranges in the mapped data types may have useful meanings in the original types, for example matching the year, or the year and month, of a date translates into a range of integer values with the proper mapping. Similarly, if we map strings in lexicographic order, then conditions on a string prefixing another translate into conditions on integer ranges.

We can handle more complex data types for property values as long as we can perform this translation. We illustrate this by considering the data type {\em text}, which is a (possibly) long string value which we not only want to condition by equality, but also by strings {\em prefixing}, {\em suffixing}, or {\em occurring in} the text value. 

The {\em suffix array} \cite{MM93} of a text string $T[1,n]$ is an array $A[1,n]$ of integers representing a permutation of $[n]$, so that $T[A[i]..] < T[A[i+1]..]$ holds for every $i$ in the lexicographic order. To compare suffixes lexicographically, we assume $T$ is terminated with a special symbol $\$$ that is smaller than every other alphabet symbol. It then holds that all the suffixes starting with some short string $P[1,m]$ form a range in $A$, whose endpoints can be found with an $O(m\log n)$-time binary search on $A$ and $T$. A compressed representation of $T$ called an {\em FM-index} \cite{FM00} uses $n\log_2\sigma + o(n\log\sigma)$ bits (where $\sigma$ is the size of the alphabet of $T$) and performs this search faster, in time $O(m\log\sigma)$.

Let us concatenate the texts of all the nodes having a property $\pi$ of this type, into a long string $T[1,n]$, all terminated with $\$$ and with another one at the beginning. We can then obtain a suffix array range for all the texts prefixed by $P$ by searching for $\$\cdot P$, or suffixed by $P$ by searching for $P\cdot \$$. By searching just for $P$ we will find the range of all the positions in $T$ where $P$ occurs. We can even query for the nodes whose text is prefixed by $P$ and suffixed by $P'$, by indexing them as circular strings and searching for $P'\cdot\$\cdot P$ \cite{FV10}.

We can now create a grid $M_\pi$ whose $x$ coordinates are the node identifiers and whose $y$ coordinates are the $n$ suffix array positions. We set a point at $(v,i)$ iff the suffix pointed to by $A[i]$ belongs to node $v$. After finding the proper suffix array range $A[i,j]$ with an FM-index, a {\em leftmost} query on $[c,+\infty] \times [i,j]$ implements the $\leap(c)$ operation on texts, just as with any other property.

The price of this extended capability on texts is that our structures are still linear-size, but not anymore succinct: the grid $M_\pi$ has $n$ points, so implemented as a wavelet tree takes $n\log_2 |\N| (1+o(1))$ bits of space, whereas the text data requires $n\log_2\sigma$ bits.
}


\section{Experiments}\label{sec:exp}

We implement PG-Ring in C++11 on top of the succinct data structures library, {\em SDSL}, as described in more detail in Appendix~\pubext{C~\cite{apps}}{\ref{sec:implem}}. We compare it with various state-of-the-art alternatives that support property graphs and GQL-like languages for two different benchmarks: LSQB and Wikidata. We run a set of queries with a time limit of 600 seconds. Following the LSQB benchmark~\cite{mhedhbi2021lsqb}, for each query, we report the average runtime over 5 runs. That measurement is computed after a first warm-up to mitigate the advantage of in-memory systems. Disk activity was monitored using \texttt{iostat} at one-second intervals. In LSQB, a single pass over the queries was enough to avoid disk reads, whereas in Wikidata, additional warm-up queries (e.g., reporting all the nodes) were required beforehand. For both benchmarks, the effect on time performance of a {\em cold cache} scenario is reported in Appendix~\pubext{G~\cite{apps}}{\ref{sec:exp-cold}}. All experiments were conducted on an AMD Ryzen 9 9900X at 4.4GHz, with 24 cores, 32 MB cache, and 249 GB RAM. 

\subsection{Labelled Subgraph Query Benchmark}

We used the Labelled Subgraph Query Benchmark (LSQB)~\cite{mhedhbi2021lsqb} to generate a graph with 35.5 million nodes and 219.4 million edges. Each node can be given one of 14 possible labels, whereas each edge can be given one of 15 possible labels. In this graph, nodes and edges do not have properties. 
We use the first six queries from the benchmark; the others are discarded because they require unsupported optional edges and negation.
See Appendix~\pubext{D~\cite{apps}}{\ref{sec:app-lsqb}} for details.

We compare systems supporting GQL (details in Appendix~\pubext{E~\cite{apps}}{\ref{app:systems}}): Umbra~\cite{NF20},  DuckDB~\cite{duckdb}, K\`uzu~\cite{JFCLS23}, Neo4j~\cite{neo4j}, Memgraph~\cite{memgraph},  MillDB~\cite{VR+23}, and our PG-Ring.
Our figures show the space required by their data and indexes, not the extra needed for query resolution. This extra space is negligible for PG-Ring but not for others that produce intermediate results (we discuss an extreme case soon).
We run the six queries on all the systems, limiting results to $1$, $10$, $100$, $1{,}000$, and $10{,}000$. Appendix~\pubext{D~\cite{apps}}{\ref{sec:app-lsqb}} shows all results; Fig.~\ref{fig:lsqb} summarizes some.  

\begin{figure}[t]
    \centering    \includegraphics[width=0.95\linewidth]{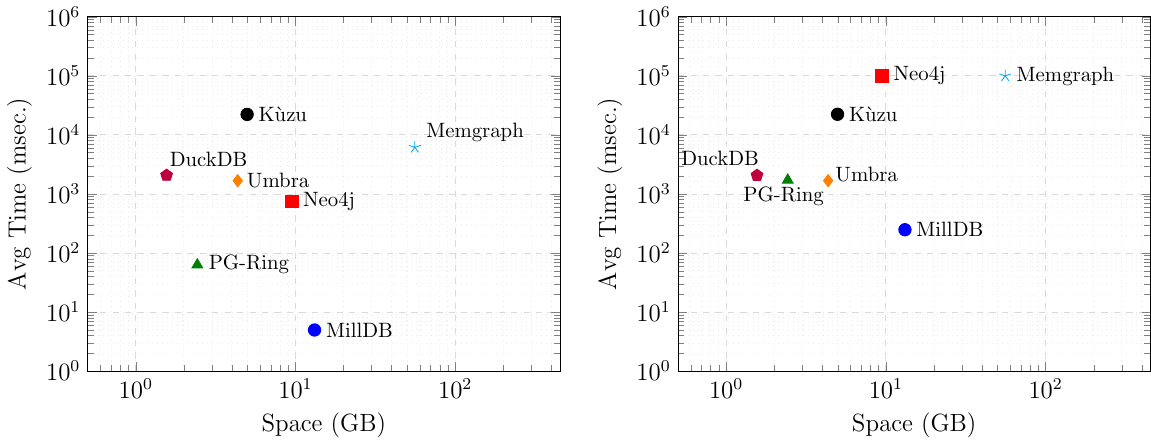}
    \caption{The space--time trade-off of the evaluated systems on six queries of LSQB. The first plot is limited to 10 results, and the second to 1{,}000.}
    \label{fig:lsqb}
\end{figure}

Fig.~\ref{fig:lsqb} (left) shows the space and average time of each system with the number of results limited to 10. We can see that PG-Ring achieves a good balance between space and time. In space, PG-Ring is dominated only by DuckDB, which is $1.6$ times smaller due to its columnar compression, but its average times are $12{-}2{,}700$ times slower (see Appendix~\pubext{D~\cite{apps}}{\ref{sec:app-lsqb}}). The next competitor in space is Umbra, which uses $1.8$ times more space than PG-Ring. In time performance, the PG-Ring wins in three cyclic queries (Q2, Q3, and Q6, see Appendix~\pubext{D~\cite{apps}}{\ref{sec:app-lsqb}}), showing the effect of using the LTJ algorithm to solve these queries. In the other queries, the clear winner is MillDB. In fact, MillDB is very competitive in all queries, but it requires around $5.4$ times more space than PG-Ring.

Fig.~\ref{fig:lsqb} (right) shows the same evaluation when restricting the results to 1{,}000. In this scenario, PG-Ring performs competitively on cyclic queries, with the exception of Q2, where Neo4j performs best. In contrast, Neo4j shows weaker performance in Q3, where it exceeds the time limit, and Q6, where it is $11$ times slower. In Q1, Memgraph stands out by being $1.8$ times faster than MillDB, but it is the most space demanding structure and times out in Q3. For all remaining queries, MillDB is the top performer, taking less than $4$ milliseconds per query. 

The plot also shows that DuckDB and Umbra fare better for obtaining many results. These systems use hash-joins, which require processing the left-side of the join completely and indexing it for lookup before processing the join's right-side (looking up each such value). Even with an output limit, all joins except those on the right extreme of the plan are computed completely, where the last, rightmost table-scan (and the joins above it) can halt early once the limit is reached. Thus the number of intermediate results they process varies little with the limit. The consequence is that they may generate huge intermediate results, preventing their use at larger scale. For example, DuckDB, although using less space than PG-Ring in the static index, generated one billion intermediate results on the cyclic query Q3, which has only 15 million final results. Even on the relatively small LSQB dataset, it ran out of memory on Q1 and Q6 without limit.

In conclusion, PG-Ring offers a good space-time trade-off, it is highly competitive for cyclic queries, highlighting the impact of the wco join algorithm; and ranks second in space usage thanks to the use of compact data structures. 

\subsection{Wikidata Benchmark}

We use the Wikidata knowledge graph~\cite{VrandecicK14} to evaluate our performance in a large scale graph with properties (via qualifiers). The Wikidata graph is represented using the RDF model; therefore, we transformed it into a property graph. The resulting graph consists of $109$ million nodes, which can take  $103$ thousand distinct labels and have $10{,}569$ different properties. In total, there are $719.5$ million edges, with $1{,}640$ possible edge labels and $603$ distinct properties. From the Wikidata SPARQL query logs, we extracted a set of $19.6$ thousand queries and converted them to GQL syntax. See Appendix~\pubext{F~\cite{apps}}{\ref{app:wikidata}} for details, which includes discussions on the limitations of property graphs for representing Wikidata~\cite{HernandezHRRZ16}, in particular for its multi-valued and node-valued qualifiers/properties.

\begin{figure}[t]
    \centering    \includegraphics[width=0.95\linewidth]{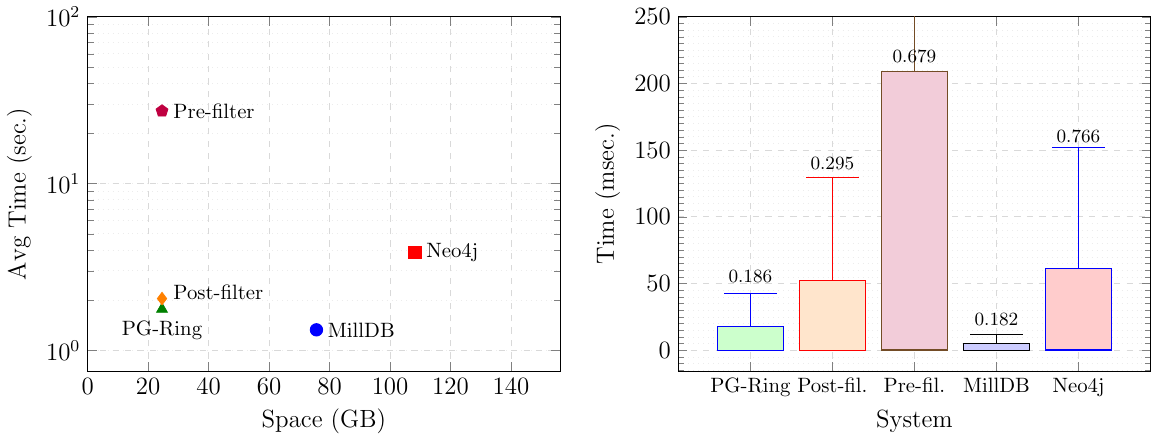}
    \caption{On the left, the space--time trade-off of the systems evaluated on Wikidata. On the right, boxplots represent the distribution of the query execution times. The numbers on top indicate the median time for each system.}
    \label{fig:wiki}
\end{figure}

We exclude Umbra~\cite{NF20}, DuckDB~\cite{duckdb}, and K\`uzu~\cite{JFCLS23} because Wikidata nodes can have multiple labels (classes), but these systems assume a single type/table per entity. Memgraph~\cite{memgraph} was also excluded due to its high memory usage and the impracticality of loading the Wikidata graph. In Neo4j~\cite{neo4j}, we create indexes on the entity identifiers and on the properties used in the queries. 

To study the effect of incorporating the filters directly to the LTJ algorithm, we add two more variants of PG-Ring: The {\em pre-filter} baseline first applies the property and label filters, and only then evaluates the edge patterns. The {\em post-filter} baseline evaluates the edge patterns first and subsequently applies filters.

Fig.~\ref{fig:wiki} shows the performance in space and time of different systems on the selected queries, limiting the output to $1{,}000{,}000$ results. PG-Ring achieves the most favorable space-time trade-off: it requires only $32\%$ of the storage used by MillDB, which is on average just $32\%$ faster than PG-Ring. MillDB shows, however, more stability across all queries, with its third quartile being at around $5$ milliseconds, while the other systems exceed $10$ milliseconds. Conversely, Neo4j is slower than PG-Ring and consumes significantly more space, exceeding $100$~GB: in a real-world scenario Neo4j will consume much more space because we are restricting the indexes to just those properties involved in these specific queries.

The experiment also shows that pushing the filters directly into the LTJ algorithm leads to superior performance. On average, PG-Ring achieves a $16\%$ improvement over the {\em post-filter} baseline and is 15 times faster than {\em pre-filter}. 
The boxplots show a more marked difference in distributions, for example PG-Ring is almost 60\% faster than {\em post-filter} in the median, and much more stable in general. 
As an example to explain this difference, consider one of the queries: 
\begin{Verbatim}[fontsize=\footnotesize]
   MATCH (v:Q5)-[e:P21]->(u) 
   WHERE (v.P569 >= 1554-01-01) AND (v.P569 < 1555-01-01)
   RETURN v, e, u
\end{Verbatim}
which gets {\em the gender (\texttt{P21}) of humans (\texttt{Q5}) born in (\texttt{P569}) 1554}, for which PG-Ring achieves a speedup over $5{,}000$ times compared to the {\em post-filter} baseline, which needs to check for every human if they were born in 1554. Conversely, the {\em pre-filter} is twice as slow as PG-Ring because it first retrieves all entities born in 1554 and only then verifies whether they are humans. Note that Wikidata can also store birth dates for non-human entities (e.g., fictional characters, animals).


\section{Conclusions}\label{sec:conclude}

We proposed succinct data structures that support wco joins and filters. We show the advantages of pushing the filtering within the wco processing of the joins, thereby uplifting the superpowers of those join processing algorithms, and outperforming the classic pre- and post-filtering methods that handle the wco joins as a black box. We adapt and optimize these structures for the concrete setting of evaluating the corresponding fragment of GQL queries over property graphs, giving rise to PG-Ring. Experiments on two property graphs show that although existing systems sometimes outperform PG-Ring in terms of time (MillDB), or space (DuckDB), PG-Ring provides a strong time--space trade-off in this setting.

For future work, we plan to implement PG-Ring for RDF/SPARQL, optimize it specifically for this setting, and compare it with leading systems. Though the Ring \cite{arroyuelo2024ring} supports this setting, an interesting question is when it is preferable to represent (e.g.) datatype values as property values versus graph nodes. 

We only support fragments of languages like GQL and SPARQL, missing typical relational operators (\texttt{UNION}, \texttt{OPTIONAL}), as well as path queries. It would be of interest to explore pushing further query operators from GQL, SPARQL, etc., into low-level operations on the succinct data structures; for example, existing methods for evaluating paths on the Ring could be deployed~\cite{ArroyueloGH0R24}.

A limitation of PG-Ring is that it is currently read-only, but it can be adapted to support updates. The original Ring article \cite[Sec.~8 \& App.~E]{arroyuelo2024ring} discusses in detail how to handle changes on nodes and edges. The PG-Ring represents property values and labels via wavelet trees and bitvectors, for which dynamic versions exist \cite{MNtalg08} while retaining succinctness and enabling updates in time $O(\log^2 N)$, but at the cost of an $O(\log N)$ time penalty factor on queries. This can be alleviated with adaptive dynamic bitvectors~\cite{Navspe25}, a recent development that supports dynamism with a much smaller penalty when the update-to-query ratio is very low, something proven to work well for Wikidata-style workloads~\cite{ACGLNRVvldbj25}. 

\paragraph*{Supplemental Material Statement:} Source code and query sets are available from Github at \url{https://github.com/adriangbrandon/PG-Ring}. The datasets are available from Zenodo at \url{https://zenodo.org/records/19818050}. Appendices are published on arXiv, at \url{https://arxiv.org/abs/2608.03840} \cite{apps}.

\paragraph*{Declaration of use of Generative AI:}

Generative AI was used only as a support tool for code completion, for assisting with the parsing and preprocessing of some data, and to produce Figs.~\ref{fig:tries}--\ref{fig:ring} in tikz from textual descriptions. All suggestions were reviewed, tested, and adapted before being included in the final work.

\bibliographystyle{plain}
\bibliography{biblio}

\pubext{\end{document}}{}
\newpage
\appendix 

\section{Formal Syntax and Semantics of our GQL Fragment} \label{sec:semantics}

\subsection{Property graph}

A property graph is a tuple $\G=(\N, \E, \rho, \LL_N, \LL_E, \lambda, \PP, \A)$ where:
\begin{enumerate}
    \item $\N$ is the set of nodes and $\E$ is a set of directed edge identifiers, $\N \cap \E = \emptyset$.
    \item $\rho: \E \to \N \times \N$ is a total function that associates each edge identifier in $\E$ with its source and target nodes in $\N$, in that order.
    \item $\LL_N$ and $\LL_E$ are sets of node and edge labels, respectively, with $\LL_N \cap \LL_E = \emptyset$.
    \item $\lambda : (\N \to 2^{\LL_N}) \cup (\E \to 2^{\LL_E})$ is a total function that associates nodes with sets of labels in $\LL_N$ and edges with sets of labels in $\LL_E$. It must hold that $|\lambda(e)|=1$ if $e \in \E$, that is, edges have exactly one label. 
    \item $\PP$ is a set of properties and $\A$ is a universe of values (literals). Each $\pi \in \PP$ is a partial function $\pi : \N \cup \E \to \A$.
\end{enumerate}

\subsection{Query syntax}

Queries use variables drawn from sets $\V_N$ (to be matched to nodes) and $\V_E$ (to be matched to edges). Those are disjoint from nodes, edges, and among them: $\V_N \cap (\N \cup \E \cup \V_E) = \emptyset$ and $\V_E \cap (\N \cup \E \cup \V_N) = \emptyset$.
Queries have the form
\begin{Verbatim}[fontsize=\footnotesize]
   MATCH patterns 
   WHERE clauses
   RETURN outputs
\end{Verbatim}
where the \texttt{WHERE} part is optional.
The patterns describe the subgraph to match, as a list of conditioned edges:
{\footnotesize
\begin{eqnarray*}
\texttt{patterns} & \longrightarrow & \texttt{pattern}^+ \\
\texttt{pattern}  & \longrightarrow & \texttt{(node-expr) -> (node-expr)} \\
\texttt{pattern}  & \longrightarrow & \texttt{(node-expr) -[edge-expr]-> (node-expr)} \\
\texttt{node-expr} & \longrightarrow & \texttt{node-base} ~|~ \texttt{: label-expr}(\LL_N) ~|~ \texttt{node-cond} \\
\texttt{node-base} & \longrightarrow & \texttt{node} \in \N ~|~ \texttt{node-var} \in \V_N \\
\texttt{node-cond} & \longrightarrow & \texttt{node-base : label-expr}(\LL_N) \\
\texttt{edge-expr} & \longrightarrow & \texttt{edge-base} ~|~ \texttt{: label-expr}(\LL_E) ~|~ \texttt{edge-cond} \\
\texttt{edge-base} & \longrightarrow & \texttt{edge-var} \in \V_E \\
\texttt{edge-cond} & \longrightarrow & \texttt{edge-base : label-expr}(\LL_E) \\
\texttt{label-expr}(\LL) & \longrightarrow & (\texttt{label} \in \LL)^+ \\
\texttt{label-expr}(\LL) & \longrightarrow & \texttt{label-expr}(\LL) ~\texttt{\&}~ \texttt{label-expr}(\LL) \\
\texttt{label-expr}(\LL) & \longrightarrow & \texttt{label-expr}(\LL) ~\texttt{|}~ \texttt{label-expr}(\LL) \\
\texttt{label-expr}(\LL) & \longrightarrow & \texttt{! label-expr}(\LL) \\
\texttt{label-expr}(\LL) & \longrightarrow & \texttt{( label-expr}(\LL)\texttt{ )} 
\end{eqnarray*}}
%
The clauses set additional conditions on the subgraph matches:
{\footnotesize
\begin{eqnarray*}
\texttt{clauses} & \longrightarrow & \texttt{clause}^+ \\
\texttt{clause}  & \longrightarrow & \texttt{node-cond} ~|~ \texttt{edge-cond} \\
\texttt{clause}  & \longrightarrow & \texttt{node-base op node-base} ~|~ \texttt{edge-base op edge-base} \\
\texttt{clause}  & \longrightarrow & \texttt{node-base.}\pi \in \PP \texttt{ IS NULL} ~|~ \texttt{node-base.}\pi \in \PP \texttt{ IS NOT NULL} \\
\texttt{clause}  & \longrightarrow & \texttt{edge-base.}\pi \in \PP \texttt{ IS NULL} ~|~ \texttt{edge-base.}\pi \in \PP \texttt{ IS NOT NULL} \\
\texttt{clause}  & \longrightarrow & \texttt{node-prop op node-prop} ~|~ \texttt{edge-prop op edge-prop} \\
\texttt{clause}  & \longrightarrow & \texttt{clause AND clause} ~|~ \texttt{clause OR clause} ~|~ \texttt{NOT clause} ~|~ \texttt{( clause )} \\
\texttt{node-prop} & \longrightarrow & a \in \A ~|~ \texttt{node-base.}\pi \in \PP \\
\texttt{edge-prop} & \longrightarrow & a \in \A ~|~ \texttt{edge-base.}\pi \in \PP \\
\texttt{op}        &\longrightarrow & \texttt{=} ~|~ \mathtt{\not}\,\texttt{=} ~|~ \texttt{<} ~|~ \texttt{<=} ~|~ \texttt{>} ~|~ \texttt{>=}
\end{eqnarray*}}
Finally, the outputs indicate what to return from the subgraph matches
{\footnotesize
\begin{eqnarray*}
\texttt{outputs} & \longrightarrow & \texttt{output}^+ \\
\texttt{output}  & \longrightarrow & \texttt{node-base} ~|~ \texttt{edge-base} \\
\texttt{output}  & \longrightarrow & \texttt{node-base.}\pi \in \PP ~|~ \texttt{edge-base.}\pi \in \PP
\end{eqnarray*}}

\subsection{Query semantics}

To define the semantics of a query, we map each edge pattern to a first-order predicate on node and edge variables. Those are not the same variables of the query, but come from another set $\V$. The predicates $N_v$ resulting from nodes and $E_v$ resulting from edges will connect the variables $v \in \V$ with the constants and the variables of the query, from $\V_N$ or $\V_E$.
We translate the edge patterns of the \texttt{MATCH} component into predicates with a function $P$, as follows:
{\footnotesize
\begin{eqnarray*}
P(\texttt{pattern}_1,\ldots,\texttt{pattern}_k) &=& P(\texttt{pattern}_1) \land \cdots \land P(\texttt{pattern}_k) \\
P(\texttt{(node-expr}_1 \texttt{)->(node-expr}_2\texttt{)}) &=& 
\exists e \in \E, u,v \in \N, \rho(e)=(u,v) \\ 
&& \land~ N_u(\texttt{node-expr}_1) \land N_v(\texttt{node-expr}_2) \\
P(\texttt{(node-expr}_1 \texttt{)-[edge-expr]->(node-expr}_2\texttt{)}) &=& 
\exists e \in \E, u,v \in \N, \rho(e)=(u,v) \\ 
&& \land~ N_u(\texttt{node-expr}_1) \land N_v(\texttt{node-expr}_2) \\
&& \land~ E_e(\texttt{edge-expr}) 
\end{eqnarray*}}
\noindent
Functions $N_v$ and $E_v$ map \texttt{node-expr} and \texttt{edge-expr}, respectively, to predicates on the variable $v \in V$. Note that the connection with the constants and variables of the query are made via $N_v(\texttt{node})$, $N_v(\texttt{node-var})$, and $E_v(\texttt{edge-var})$. Similarly, functions $L_v$ map \texttt{label-expr}:

{\footnotesize
\begin{eqnarray*}
N_v(\texttt{node} \in \N) &=& (v = \texttt{node}) \\
N_v(\texttt{node-var} \in \V_N) &=& (v = \texttt{node-var}) \\
N_v(\texttt{: label-expr}) &=& L_v(\texttt{label-expr}) \\
N_v(\texttt{node-base : label-expr}) &=& N_v(\texttt{node-base}) \land L_v(\texttt{label-expr}) \\
E_v(\texttt{edge-var} \in \V_E) &=& (v = \texttt{edge-var}) \\
E_v(\texttt{: label-expr}) &=& L_v(\texttt{label-expr}) \\
E_v(\texttt{edge-base : label-expr}) &=& E_v(\texttt{edge-base}) \land L_v(\texttt{label-expr}) \\
L_v(\texttt{label}_1, \ldots, \texttt{label}_k) &=& (\texttt{label}_1 \in \lambda(v)) \land \cdots \land (\texttt{label}_k \in \lambda(v)) \\
L_v(\texttt{label-expr}_1 \texttt{ \& } \texttt{label-expr}_2) &=& L_v(\texttt{label-expr}_1) \land L_v(\texttt{label-expr}_2) \\
L_v(\texttt{label-expr}_1 \texttt{ | } \texttt{label-expr}_2) &=& L_v(\texttt{label-expr}_1) \lor L_v(\texttt{label-expr}_2) \\
L_v(\texttt{!label-expr}) &=& \neg L_v(\texttt{label-expr}) \\
L_v(\texttt{(label-expr)}) &=& L_v(\texttt{label-expr})
\end{eqnarray*}}
Clauses set further conditions on \texttt{MATCH} via variables in $\V_N \cup \V_E$. Function $C$ connects these conditions by using predicates $N$ and $E$ directly on the variables of $\V_N \cup \V_E$.
{\footnotesize
\begin{eqnarray*}
C(\texttt{clause}_1,\ldots,\texttt{clause}_k) &=& C(\texttt{clause}_1) \land \cdots \land C(\texttt{clause}_k) \\
C(\texttt{node} \in \N \texttt{ : label-expr}) &=& N_{\texttt{node}}(\texttt{label-expr}) \\
C(\texttt{node-var} \in \V_N \texttt{ : label-expr}) &=& N_{\texttt{node-var}}(\texttt{label-expr}) \\
C(\texttt{edge-var} \in \V_E \texttt{ : label-expr}) &=& E_{\texttt{edge-var}}(\texttt{label-expr}) \\
C(\texttt{node-base}_1 \texttt{ op node-base}_2) &=& \texttt{node-base}_1 \texttt{ op } \texttt{node-base}_2 \\
C(\texttt{edge-base}_1 \texttt{ op edge-base}_2) &=& \texttt{edge-base}_1 \texttt{ op } \texttt{edge-base}_2 \\
C(\texttt{node-base.}\pi \in \PP \texttt{ IS}/\texttt{IS NOT NULL}) &=& (\texttt{node-base} \not\in\!/\!\in \textrm{Dom}(\pi)) \\
C(\texttt{edge-base.}\pi \in \PP \texttt{ IS}/\texttt{IS NOT NULL}) &=& (\texttt{edge-base} \not\in\!/\!\in \textrm{Dom}(\pi)) \\
C(\texttt{node-prop}_1 \texttt{ op node-prop}_2) &=& V(\texttt{node-prop}_1) \texttt{ op } V(\texttt{node-prop}_2) \\
C(\texttt{edge-prop}_1 \texttt{ op edge-prop}_2) &=& V(\texttt{edge-prop}_1) \texttt{ op } V(\texttt{edge-prop}_2) \\
C(\texttt{clause}_1 \texttt{ AND clause}_2) &=& C(\texttt{clause}_1) \land C(\texttt{clause}_2) \\
C(\texttt{clause}_1 \texttt{ OR clause}_2) &=& C(\texttt{clause}_1) \lor C(\texttt{clause}_2) \\
C(\texttt{NOT clause}) &=& \neg C(\texttt{clause}) \\
C(\texttt{(clause)}) &=& C(\texttt{clause}) \\
V(a \in \A) &=& a \\
V(\texttt{node-base.}\pi \in \PP) &=& \pi(\texttt{node-base}) \\
V(\texttt{edge-base.}\pi \in \PP) &=& \pi(\texttt{edge-base})
\end{eqnarray*}}
At this point, the expression $P(\texttt{patterns}) \land C(\texttt{clauses})$ is a predicate whose free variables are those of $\V_N \cup \V_E$ used by the query.
We will form, from the \texttt{RETURN} clause, the tuple of values to be returned:
{\footnotesize
\begin{eqnarray*}
O(\texttt{output}_1,\ldots,\texttt{output}_k) &=& (O(\texttt{output}_1),\ldots,O(\texttt{output}_k)) \\
O(\texttt{node-base}) &=& \texttt{node-base} \\ 
O(\texttt{edge-base}) &=& \texttt{edge-base} \\
O(\texttt{node-base.}\pi \in \PP) &=& \pi(\texttt{node-base}) \\
O(\texttt{edge-base.}\pi \in \PP) &=& \pi(\texttt{edge-base})
\end{eqnarray*}}
Finally, the semantics of the \texttt{MATCH-WHERE-RETURN} query is (with bag semantics)
\[ \{ O(\texttt{outputs}) ~\mid~ P(\texttt{patterns}) \land C(\texttt{clauses}) \}. \]

\noindent For illustration, the semantics of the example query given at the end of Section~\ref{sec:pg-model} is as follows:
{\footnotesize
\begin{eqnarray*}
\{ (\texttt{name}(x),\texttt{age}(x)) & \mid & \exists e_1 \in \E, u_1,v_1 \in \N, \rho(e_1)=(u_1,v_1) \\
       &\land& \exists e_2 \in \E, u_2,v_2 \in \N, \rho(e_2)=(u_2,v_2) \\
       &\land& u_1 = x ~\land~ \texttt{person} \in \lambda(u_1) ~\land~ v_1 = \texttt{Europe} ~\land~ \texttt{lives} \in \lambda(e_1) \\
       &\land& u_2=x ~\land~ v_2=\texttt{CS} ~\land~ e_2=e ~\land~ \texttt{works} \in \lambda(e_2) \\
       &\land& \texttt{age}(x) \ge 50 ~\land (\texttt{since}(e) < 01/01/2010 ~\lor~ \neg \texttt{has-phd} \in \lambda(x)) \}
\end{eqnarray*}}

\noindent Here, $x$ and $e$ are free variables used in the query, where the values for two properties of $x$ are projected, while $e$ is not projected.

\section{PG-Ring: Additional Details} \label{app:pg-ring}

%
%
%

\subsection{Leaptors for edge variables}

We implement $\leap(c)$ differently depending on which sequence $L_*$ holds the range that represents the locus $v_t$:

\begin{itemize}
    \item $\Ls[i,j]$: this is the easiest case because the order \texttt{POS} coincides with that of the edge identifiers. We simply return the first element of $[i,j] \cap [c,N]$, that is, $\max(i,c)$ if it is $\le j$, or else $+\infty$.
    \item $\Lo[i,j]$: this case occurs when a value $s$ was previously used to descend from a range $\Ls[i',j']$. We then restrict the range in $\Lo$ by starting from range $[i',j'] \cap [c,N]$ in $\Ls$ and descending by $s$ towards $\Lo[i'',j'']$, as described in Section~\ref{sec:ext-ring} when binding $p$ after $s$. If $i''>j''$ (i.e., the range is empty) we return $+\infty$, otherwise we track $\Lo[i'']$ to $\Ls$ as also described in Section~\ref{sec:ext-ring}, and return the resulting position. 
    \item $\Lp[i,j]$: we first find $p$ such that $\Ap[p] < c \le \Ap[p+1]$, that is, $p$ covers position $c$ in the order \texttt{POS}. Now there are two subcases.
    \begin{enumerate}
        \item The answer is in the range $[c,\Ap[p+1]]$, meaning that its corresponding label is still $p$. To detect this case, we compute the number $k=c-\Ap[p]$ of occurrences of $p$ up to the one in $\Ls[c]=p$, and determine if the $k$th occurrence of $p$, of a posterior one, occurs in $\Lp[i,j]$: we map $\Lp[i,j]$ by $p$ to $\Ls[i',j']$, and if $[i',j'] \cap [c,N]$ is not empty, we return $\max(i',c)$. 
        \item Otherwise, the answer is past the area of label $p$. We then  find the smallest value $p' > p$ in $\Lp[i,j]$ (using operation {\em range next value} on the wavelet tree of $\Lp$). If no such $p'$ exists, we return $+\infty$. Otherwise, we descend by $p'$ towards $\Ls[i'',j'']$ and return $i''$. 
    \end{enumerate}
 \end{itemize}

When it comes to binding the value of an edge variable $z := e$, the resulting range will contain only one entry. We can start from $\Ls[e]$ and track it to the sequence that contains the current range, as explained in Section~\ref{sec:ext-ring}.

\subsection{Leaptors for node properties}

Now that each property $\pi$ defined for a node $v$ has exactly one value $v.\pi$, we can simplify the scheme of Section~\ref{sec:ext-ring}, storing bitvectors $B_\pi[1,|\N|]$ that tell at $B_\pi[v]$ whether node $v$ has property $\pi$ (in the example of Fig.~\ref{fig:grid}, $B_{\textsf{age}}= 01111$). If so, the wavelet tree sequence $S_\pi$ that represents $M_\pi$ has value $S_\pi[rank_1(B_\pi,v)] = v.\pi$, and the first position of $S_\pi$ with node identifier $\ge c$ is $rank_1(B_\pi,c-1)+1$.

A leaptor for $\leap(c)$ with ranges $a^+ \le x.\pi \le b^-$, for constants $a^+$ and $b^-$, is implemented as follows. We first compute $c' = rank_1(B_\pi,c-1)+1$. Second, we compute $c'' \gets \textit{leftmost}([c',+\infty],[a',b'])$ with the wavelet tree of $S_\pi$. Finally, if there is no answer we return $+\infty$ and otherwise we return $select_1(B_\pi,c'')$.

For range comparisons between properties of two variables, say $x.\pi \le y.\pi'$, we proceed as follows for $\leap(c)$ on variable $x$: if $y$ is not yet bound, we compute $c'$ as in the previous paragraph and return $select_1(B_\pi,c')$, the first node $\ge c$ having property $\pi$. Otherwise, we proceed as in the previous paragraph, treating $y.\pi'$ as a constant (which is found at $S_{\pi'}[rank_1(B_{\pi'},y)]$).

We similarly handle comparisons between node identifiers, like $x<y$, treating them in the same way as property comparisons $x.id < y.id$ for a special property $id$, with $x.id = x$ (the numeric identifier of $x \in \N$). That is, when processing $x$, we return $\leap(c) = +\infty$ if $y$ is bound and $c \ge y$, and $\leap(c) = c$ otherwise.

\subsection{Boolean leaptors for the \texttt{MATCH} and \texttt{WHERE} clauses}


We create a single syntax tree with the conditions from the \texttt{MATCH} triples and the \texttt{WHERE} clauses, if any.
Unlike the syntax trees of label expressions, some leaves of this syntax tree may not apply for each current variable $x$ being eliminated, because $x$ is not mentioned in the condition. Similarly, leaves representing edge patterns do not apply if they do not contain the variable $x$ being eliminated. When eliminating $x$, those leaves become {\em trivial} leaptors $\leap(c)=c$. For each new variable $x$ to eliminate, we first simplify the syntax tree for $x$, with the rules $\textit{trivial } \texttt{AND}\, X = X$ and $\textit{trivial } \texttt{OR}\, X = \textit{trivial}$. This, in particular, removes edge patterns that do not participate in the elimination of $x$. In Fig.~\ref{fig:syntax}, for example, the \texttt{OR} node disappears when eliminating $x$. Similarly, bound variables may render conditions directly true or false (such as \texttt{!x.hasPhD} in our example once we eliminate $x$) and those are simplified in the syntax tree for the elimination of further variables: when eliminating variable $e$, the syntax tree is either just the trie for the edge variable $e$ (if $x.\textsf{hasPhD}$ holds for the bound value of $x$) or that node \texttt{AND}ed with the leaf for \texttt{e.since < 01/01/2010} (if not). We say that the reduced syntax tree is {\em reduced for $x$}.

\section{Implementation} \label{sec:implem}

We implement PG-Ring in C++11 on top of the succinct data structures library, {\em SDSL} (\url{https://github.com/simongog/sdsl-lite})~\cite{gbmp2014sea}. The grids of properties $M_{\pi}$ are implemented with wavelet matrices (\texttt{wm\_int}) and $B_{\pi}$ as plain bitvectors (\texttt{bit\_vector}). The bitvectors $B_\ell$ for labels are represented using sparse bitvectors (\texttt{sd\_vector}).

Generally, datasets contain node, edge, and label identifiers as strings, but PG-Ring needs to deal with them as integers. As they are static identifiers, for each of them, we store a dictionary that allows the mapping between strings and integers. We use a dictionary based on Plain-Front Coding from the compressed string dictionaries library, {\em libCSD} (\url{https://github.com/migumar2/libCSD})~\cite{MPBCCNis15}. Similarly, the PG-Ring represents each property value as an integer, so we need to map the original values to an integer depending on the property type (e.g. integers, floats, dates). To this end, we transform each value with a function that associates the original value with an integer identifier. This function is type-dependent and ensures that (i) the original value can be recovered from its identifier, and (ii) the identifier can be compared with other identifiers of the same type in exactly the same way as the original values are compared.

Our system works under {\em bag semantics} (allowing repeated results) and supports {\em homomorphic pattern matching} (allowing two edge patterns match the same graph edge).  When the system receives a query, we use the dictionaries and functions to get the integer identifiers. Once the solutions are obtained, their identifiers are mapped to the original data. We describe next how we choose appropriate variable elimination orders in the presence of clauses on properties and labels. In this prototype, disjunctions in the \texttt{WHERE} clause are not implemented.

\subsection{Variable elimination orders} \label{sec:veo}

The order in which variables are eliminated can have an important impact on practical efficiency \cite{leapfrog}. Good practices for LTJ on graph databases \cite{HRRSiswc19} are (i) leave {\em lonely variables} (those appearing only once in the BGP) to the end, (ii) eliminate first the variables that will produce fewer bindings, and (iii) avoid if possible to bind variables not connected to anyone already bound. For the Ring, it was shown \cite{arroyuelo2024ring} that taking the minimum length of the $L_*$ ranges of all the loci in $T_x$ was a good predictor to choose the variable with fewest bindings. They also showed that {\em adaptive} VEOs, which choose the next variable to eliminate only after binding the current one, and considering the new resulting range lengths (which differ in each branch) outperforms {\em static} VEOs, which define a fixed variable ordering from the beginning.

To choose a good VEO for the PG-Ring, we classify the variables as follows:

\begin{enumerate}
    \item \textit{Non-lonely node variables}: variables in $\V_N$ that appear in two or more edge patterns.
    \item \textit{Non-lonely edge variables}: variables in $\V_E$ that appear in two or more edge patterns.
    \item \textit{Lonely variables}: node or edge variables that occur in just one edge pattern.
\end{enumerate}

The non-lonely node variables are better choices to eliminate first: binding one of them fixes a node across multiple branches, while fixing an edge constrains only one branch. Accordingly, we first bind the non-lonely node variables, then the non-lonely edge variables, and finally the lonely variables. Within each set, we follow policies (ii) and (iii). For (ii), we estimate the {\em selectivity} of the bindings, both for the edge patterns and for the \texttt{WHERE} clauses, to bind first variables with lower selectivity. Selectivity will be the expected fraction of matches for a given variable $x$ and a node of the reduced syntax tree for $x$, assuming uniformity and independence. The selectivity of \texttt{AND} nodes is then the product of the selectivities of their children, and that of an \texttt{OR} node with children selectivities $s_1$ and $s_2$ is $1-(1-s_1)(1-s_2)$. For edge pattern leaves, selectivity is estimated as the length of the range in the corresponding $L_*$ sequence divided by $N$ (i.e., the fraction of edges $x$ can match). For the conditions, we proceed as follows.

\no{
\subsection{Maximum number of edges}

Obtaining the first term of Equation~\ref{eq:veo}, just requires to compute the values $|E_i|$. Similarly to the Ring, for each edge pattern, our structure tracks which edges are matchable with the active ranges. Hence, $|E_i|$ can be computed as the sum of the lengths of those active ranges in constant time. An edge belongs to the solution if it is contained in every set $E_i$. Therefore, the maximum number of such edges is equivalent to the minimum size of the sets $E_i$.

\subsection{Selectivity of a variable}

Note that a variable may participate in comparisons involving either property attributes or identifiers. In both cases, we assume the data is uniformly distributed and compute $\mathrm{sel}(x, c_j)$ as the probability that the variable $x$ satisfies comparison $c_j$.
}

\paragraph{Comparisons of property values.} 
Consider a clause comparing $x.\pi$ with $y.\pi'$. We know from the data the ranges $[\min_\pi, \max_\pi]$ for $x.\pi$ and $[\min_{\pi'}, \max_{\pi'}]$ for $y.\pi'$. Selectivity is estimated as the probability that a point from the two-dimensional rectangle $[\min_\pi, \max_\pi]\times[\min_{\pi'}, \max_{\pi'}]$ satisfies the comparison. For example, $x.\pi > y.\pi'$ corresponds to the probability that a point lies above the diagonal line ($x.\pi=y.\pi'$), assuming uniformity. Further, assuming independence, we multiply this probability by the fraction of $1$s in $B_\pi$ and by the fraction of $1$s in $B_{\pi'}$, which are the fractions of node pairs having properties $\pi$ and $\pi'$, respectively. 

In case one of the values is fixed (e.g., $x.\pi > a$ for a constant $a$, or $y$ is bound and $y.\pi'=a$ in $x.\pi > y.\pi'$), the selectivity estimation can be refined to the fraction of the range $[\min_\pi, \max_\pi]$ that satisfies that condition, $(\max_\pi - a)/(\max_\pi - \min_\pi +1 )$, times the fraction of $1$s in $B_\pi$. A more refined option is to  explicitly count the number of points in $M_\pi$ within the range defined by the comparison (in our case, $[1,+\infty] \times [a+1, \infty]$) and divide it by $n$  (where $n=|\N|$ if $x \in \V_N$ and $n=N$ if $x \in \V_E$). This counting takes $O(\log |\A_\pi|)$ time with wavelet trees, which is significant if we have to apply it on every branch. This option is thus applied only on the static VEO (which we implemented and found to be slower than using the dynamic VEO, so we show experiments only on the latter).

\paragraph{Comparisons of identifiers.}

For comparisons involving the identifiers of two variables, the estimation is analogous but easier. As the values of both variables are in the same range $[1, n]$, equality comparisons are assigned a selectivity of $1/n$, inequality comparisons a selectivity of $1-1/n$, and all other types of comparisons a default selectivity of $0.5$. In case one of the identifiers is fixed (e.g., $x > v$ for a constant $v$, or $y$ is bound to $v$ in $x>y$), we estimate it by the fraction of the range $[1, n]$ that satisfies the condition.

\paragraph{Expressions on labels.}

We similarly use the fraction $f_\ell$ of $1$s in $B_\ell$ to estimate the selectivity of a query $x.\ell$. In case of negations, the correct estimation is $1-f_\ell$. For more complex Boolean formulas, we do as with general Boolean expressions.

\paragraph{Internal query optimization.}
Beyond choosing VEOs, we can optimize the reduced Boolean syntax tree that results for each variable $x$ to eliminate: The children of \texttt{AND} nodes are probed left to right, restarting with the leftmost one after each value $c$ in the intersection is found; therefore \texttt{AND} children should be sorted by most selective conditions first (edge patterns and filters can be mixed in this order). Further, range conditions on the same variable $x$ or a property $x.\pi$ should be combined into one (like $x.\pi \ge a$ \texttt{AND} $x.\pi \le b$ into $a \le x.\pi \le b$), because the grid query is more efficient than the alternation complexity. 






\section{The LSQB Benchmark}\label{sec:app-lsqb}

\begin{figure}[t]
    \centering    \includegraphics[width=0.95\linewidth]{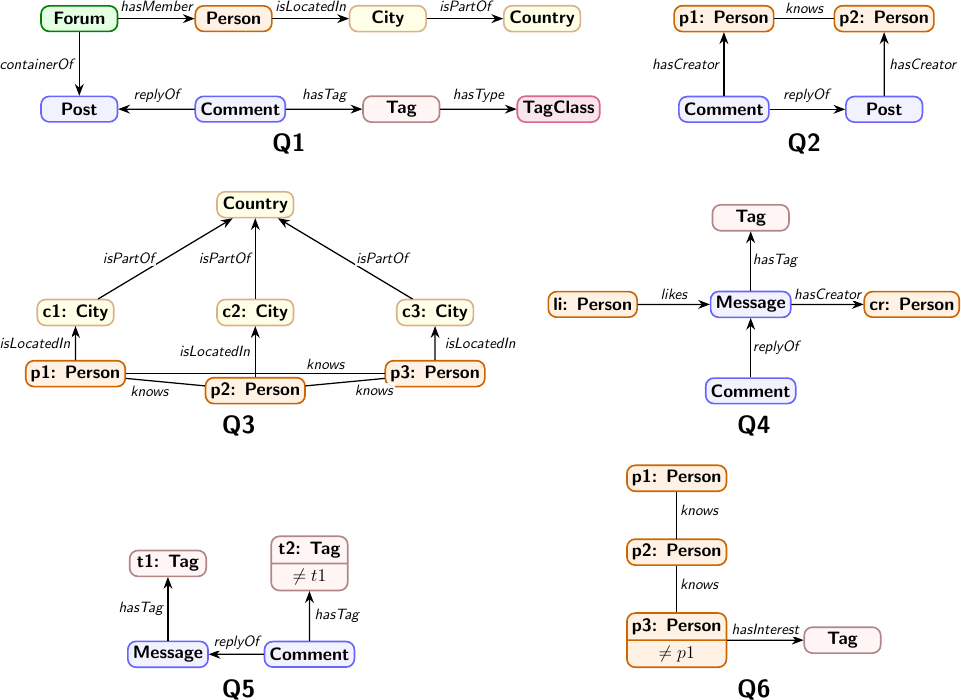}
    \caption{The first six queries from the Labelled Subgraph Queries Benchmark.}
    \label{fig:lsqb-queries}
\end{figure}

The Labelled Subgraph Query Benchmark (LSQB)~\cite{mhedhbi2021lsqb} is a subgraph matching benchmark designed to evaluate the performance of database management systems in graphs with labels in nodes and edges. The benchmark uses the LDBC social network generator~\cite{erling2015ldbc} to create the dataset. In our case, we set the {\em scalability factor} to 10, obtaining 35,496,603 nodes and 219,425,671 edges. Each node can be given one of 14 possible labels, whereas each edge can be given one of 15 possible labels. In this benchmark, nodes and edges do not have properties.

The LSQB benchmark consists of nine queries, where the last three contain optional edges and negations of relationships. The PG-Ring does not support this kind of operators, thus we limit the benchmark to the first six queries.
Fig.~\ref{fig:lsqb-queries} shows the chosen queries and Table~\ref{tab:lsqb-avgs} the average query times obtained.

\begin{table}[p]
    \centering
    \renewcommand{\arraystretch}{1.2}
\setlength{\tabcolsep}{4pt}
\caption{Average times (in msec) of queries, with different limits on the number of results reported. \textcolor{gray}{TO} stands for timeouts, set at 600,000 msec.}
    \label{tab:lsqb-avgs}
\scriptsize
\begin{tabular}{l r r r r r r r}
\toprule
\textbf{System} & \textbf{Limit}
& \textbf{Q1} & \textbf{Q2} & \textbf{Q3} & \textbf{Q4} & \textbf{Q5} & \textbf{Q6} \\
\midrule

\multirow{5}{*}{\textbf{Umbra}}
& 1        & 1{,}912.18 & 462.28 & 159.64 & 5{,}031.74 & 2{,}435.16 & 45.62 \\
& 10       & 1{,}912.38 & 465.54 & 162.26 & 5{,}038.58 & 2{,}432.72 & 45.28 \\
& 100      & 1{,}942.86 & 574.16 & 161.16 & 5{,}061.36 & 2{,}453.38 & 45.68 \\
& 1{,}000  & 1{,}913.06 & 474.24 & 162.04 & 5{,}028.82 & 2{,}437.34 & 46.62 \\
& 10{,}000 & 1{,}939.22 & 509.38 & 188.14 & 5{,}091.54 & 2{,}489.56 & 48.70 \\
\midrule

\multirow{5}{*}{\textbf{DuckDB}}
& 1        & 3{,}111.74 & 454.40 & 300.04 & 5{,}222.78 & 3{,}198.34 & 131.96 \\
& 10       & 3{,}071.52 & 446.38 & 299.24 & 5{,}138.54 & 3{,}199.64 & 137.22 \\
& 100      & 3{,}148.88 & 453.56 & 304.50 & 5{,}236.04 & 3{,}247.80 & 138.80 \\
& 1{,}000  & 3{,}109.04 & 451.96 & 301.62 & 5{,}196.26 & 3{,}202.10 & 138.52 \\
& 10{,}000 & 3{,}128.20 & \textbf{455.14} & 311.40 & 5{,}229.64 & 3{,}222.50 & 140.64 \\
\midrule

\multirow{5}{*}{\textbf{Kuzu}}
& 1        & 11{,}352.10 & 1{,}403.96 & 115{,}401.98 & 2{,}277.40 & 2{,}796.80 & 60.84 \\
& 10       & 11{,}207.38 & 1{,}394.62 & 115{,}999.14 & 2{,}260.88 & 2{,}772.04 & 61.06 \\
& 100      & 11{,}372.08 & 1{,}392.02 & 116{,}026.98 & 2{,}271.80 & 2{,}777.86 & 63.20 \\
& 1{,}000  & 11{,}440.90 & 1{,}406.14 & 116{,}053.84 & 2{,}279.18 & 2{,}784.96 & 65.10 \\
& 10{,}000 & 11{,}201.86 & 1{,}432.66 & 115{,}931.40 & 2{,}291.42 & 2{,}799.98 & 80.84 \\
\midrule

\multirow{5}{*}{\textbf{Neo4j}}
& 1      & 7.56  & 2.60 & 2{,}165.30 & 3.14 & 4.40 & 1.86 \\
& 10     & 8.92  & 3.66 & 4{,}439.06 & 3.76 & 5.82 & 2.26 \\
& 100    & 12.24 & \textbf{6.42} & \timeout   & 5.76 & 16.76 & 5.38 \\
& 1{,}000  & 56.22 & \textbf{32.08} & \timeout & 35.88 & 33.96 & 27.28 \\
& 10{,}000 & 479.56 & 2{,}449.34 & \timeout & 337.00 & 291.10 & 325.70 \\
\midrule

\multirow{5}{*}{\textbf{Memgraph}}
& 1      & \textbf{5.04}  & 42.26 & 36{,}936.34 & 39.92 & 9.06 & 2.56 \\
& 10     & \textbf{4.38} & 37.88 & 65{,}980.02 & 36.84 & 8.00 & 1.72 \\
& 100 & \textbf{5.18} & 38.88 & \timeout & 35.32 & 10.70 & 1.10 \\
& 1{,}000  & \textbf{17.26} & 192.38 & \timeout & 50.88 & 9.46 & 4.50 \\
& 10{,}000 & \textbf{122.20} & 810.90 & \timeout & 86.12 & 52.16 & 36.02 \\
\midrule

\multirow{5}{*}{\textbf{MillDB}}
& 1 & 5.32 & 1.78 & 9.19 & 2.64 & \textbf{1.39} & 1.78 \\
& 10     & 5.79 & 2.25 & 15.80 & \textbf{2.70} & \textbf{1.47} & 1.84 \\
& 100 & 9.16 & 11.79 & 147.86 & 2.00 & \textbf{1.47} & 1.08\\
& 1{,}000  & 31.34 & 77.91 & 1{,}368.47 & \textbf{3.44} & \textbf{3.14} & \textbf{1.93} \\
& 10{,}000 & 165.66 & 786.65 & 13{,}582.08 & \textbf{15.98} & \textbf{21.62} & \textbf{9.59} \\
\midrule

\multirow{5}{*}{\textbf{PG-Ring}}
& 1 & 127.37 & \textbf{0.60} & \textbf{0.12} & 2.07 &103.46 & \textbf{0.03} \\
& 10     & 259.25 & \textbf{1.45} & \textbf{0.66} & 3.24 & 114.01 & \textbf{0.05} \\
& 100 & 1{,}120.93 & 17.00 & \textbf{3.67} & 18.62 & 211.50 & \textbf{0.25} \\
& 1{,}000  & 8{,}810.63 & 173.70 & \textbf{24.17} & 150.87 & 1{,}135.47 & 2.48 \\
& 10{,}000 & 56{,}581.03 & 1{,}601.01 & 275.35 & 902.16 & 10{,}086.07 & 19.99 \\

\bottomrule
\end{tabular}
\end{table}

\section{Systems Compared} \label{app:systems}

In the evaluation, we use these systems with the following settings:
\begin{itemize}
    \item Umbra~\cite{NF20}: A system based on relational tables whose query plans use binary joins but may introduce wco plans for some sub-queries.
    \item DuckDB~\cite{duckdb}: A non-wco relational database that stores data in columnar format. It uses vectorized and pipelined models to produce optimized binary join plans.
    \item K\`uzu~\cite{JFCLS23}: A graph query engine that stores property graphs using unsorted adjacency lists. It employs cost-based dynamic programming to generate execution plans that combine wco and pairwise joins.
    \item Neo4j~\cite{neo4j}: A widely used graph database that implements the property graph model. Its execution plans are based on index lookups, expand operators, and pairwise joins. 
    \item Memgraph~\cite{memgraph}: An in-memory graph database that stores the property graph model using adjacency structures. Query plans rely on traversal-based operators combined with pairwise joins.
    \item MillDB~\cite{VR+23}: A graph database that supports the property graph model using a quad-based storage. The query plans combine wco joins and pairwise joins.
    \item PG-Ring: the system presented in this work.
\end{itemize}

We ensure that all systems run in a single-threaded execution. All the systems use non-repeated edge semantics, except MillDB and PG-Ring. For this reason, for those two systems the queries of LSQB are modified adding in the \texttt{WHERE} clause additional expressions to avoid repeated edges. As K\`uzu, PG-Ring does not support undirected edges in the query pattern. To support all LSQB queries, we handle undirected edges by adding them in both directions.

\section{The Wikidata Benchmark}
\label{app:wikidata}

We convert a Wikidata knowledge graph~\cite{VrandecicK14} into a property graph as follows. 
Each Wikidata entity is modeled as a node, with the exception of those referenced by the \textsf{{\em instance-of}} property, which are converted into node labels. Those literals (e.g. strings, numbers, coordinates, dates) linked to a node are represented as properties of the node. The remaining RDF properties are modeled as relationships whose label is the RDF property itself. The edges store the temporal, numeric, and geospatial properties obtained from their qualifiers. In addition, we verify that the added values match the corresponding property type. The resulting graph consists of $108{,}882{,}974$ nodes, which can take on $102{,}718$ distinct labels and have $10{,}569$ different properties. In total, there are $719{,}656{,}925$ edges, featuring $1{,}640$ possible edge labels and $603$ distinct properties.

It is important to note that, per well-known limitations~\cite{HernandezHRRZ16}, we cannot directly model Wikidata completely with property graphs. Specifically, we cannot directly represent qualifiers that have items/nodes as values (we can add a string ID of the node, but this does not necessarily ``join" with the node, unless we duplicate the ID as a node property) or multi-valued properties (which could be captured as arrays or lists where supported as datatypes). We omit such features for simplicity as they should not affect comparative performance.

We obtained $19{,}605$ queries from the Wikidata SPARQL query log by extracting BGPs with filters and applying a similar transformation to the BGPs (and filters) in order to match the transformed knowledge graph. Queries that are syntactically similar (e.g., same graph pattern, same filter types) are grouped together, regardless of constants and operands. 

More precisely,
from each SPARQL query in the Wikidata logs, we extract (maximal) subqueries corresponding to a BGP with (if present) filters. We skip queries with multiple such BGPS (ignoring SERVICE clauses), and remove BGPs: (1) with Cartesian products (not fully connected via variables), (2) with constants not in the graph, (3) with unsupported filters (beyond $<$, $>$, $=$, $<=$, $\>=$, !, \&, $\mid$) or filters with variables not in the BGP, and (4) seen previously (modulo isomorphism). Per the data mapping, coordinates are split into two lat/long properties, constant values for P31 (instance of) are added as node labels, relations between items are added as edges with a fresh edge variable, datatype property values are added as properties to nodes, and datatype qualifier values are added as properties to edges. The resulting query is written to GQL syntax; if the BGP contains a non-projected variable on a datatype property Pn for subject s, we add ``IS NOT NULL(s.Pn)'' to ensure existence.

These queries are classified into different groups by computing a structural fingerprint of the graph pattern and a descriptor of the WHERE clause. We find 391 groups of queries partitioning by fingerprint. The top 5 are simple triple patterns without WHERE clause, covering almost 50\% of the queries. We select $1{,}000$ queries by sampling uniformly while ensuring at least one representative of each group.

\section{Cold Cache Scenario} \label{sec:exp-cold}

For fairness to disk-based systems, we reported the averaged time in a {\em warm cache} scenario. We have also evaluated the systems in a {\em cold cache} setup: before the execution of each set of queries for each system, we clean the cache with the command \texttt{echo 3 | sudo tee /proc/sys/vm/drop\_caches}. Then, we measure the running times of a single execution of each query. 

Fig.~\ref{fig:wiki-cold} and Table~\ref{tab:wiki-cold} detail the results on both benchmarks in the {\em cold cache} scenario. We can observe how the performance of disk-based systems degrades with respect to the {\em warm cache} scenario. This time, the outstanding systems in the LSQB benchmark are Memgraph and PG-Ring. On the Wikidata benchmark, the cold cache setup sharply affects the performance of both MillDB and Neo4j, as they are disk-based.

\begin{table}
    \centering  
    \renewcommand{\arraystretch}{1.2}
    \setlength{\tabcolsep}{4pt}

    \caption{Detailed execution times (in msec) for the six queries of LSQB in the {\em cold cache} scenario. \textcolor{gray}{TO} stands for timeouts, set at 600,000 msec.}
    \label{tab:wiki-cold}
    
    \scriptsize
    \begin{tabular}{l r r r r r r r}
    \toprule
    \textbf{System} & \textbf{Limit}
    & \textbf{Q1} & \textbf{Q2} & \textbf{Q3} & \textbf{Q4} & \textbf{Q5} & \textbf{Q6} \\
    \midrule
    
    \multirow{5}{*}{\textbf{Umbra}} 
    & 1      & 37{,}701.30 & 2{,}261.50 & 190.40 & 14{,}710.50 & 2{,}512.00 & 421.70 \\
    & 10     & 37{,}806.90 & 2{,}235.14 & 188.86 & 14{,}835.52 & 2{,}547.69 & 426.12 \\
    & 100    & 37{,}858.30 & 2{,}234.60 & 188.70 & 14{,}812.40 & 2{,}504.10 & 417.30 \\
    & 1{,}000  & 37{,}875.42 & 2{,}233.34 & 195.78 & 14{,}842.23 & 2{,}537.86 & 423.26 \\
    & 10{,}000  & 38{,}079.30 & 2{,}277.30 & \textbf{211.60} & 14{,}841.10 & 2{,}520.60 & 426.90 \\
    \midrule
    
    \multirow{5}{*}{\textbf{DuckDB}} 
    & 1      & 12{,}018.00 & 807.70 & 295.90 & 11{,}489.20 & 4{,}370.30 & 187.40 \\
    & 10     & 12{,}321.13 & 884.21 & 297.54 & 11{,}795.84 & 4{,}263.62 & 182.53 \\
    & 100    & 12{,}164.50 & 854.20 & 293.10 & 11{,}637.20 & 4{,}254.40 & 155.80 \\
    & 1{,}000  & 11{,}929.72 & 867.36 & 301.84 & 11{,}665.70 & 4{,}289.34 & 166.20 \\
    & 10{,}000  & 12{,}234.70 & 854.30 & 301.30 & 11{,}430.60 & 4{,}447.40 & 191.30 \\
    \midrule
    
    \multirow{5}{*}{\textbf{Kuzu}} 
    & 1      & 15{,}634.50 & 4{,}772.30 & 116{,}282.50 & 3{,}556.90 & 3{,}742.70 & 243.80 \\
    & 10     & 16{,}416.00 & 3{,}140.02 & 118{,}911.93 & 3{,}314.12 & 3{,}875.07 & 140.09 \\
    & 100    & 16{,}518.40 & 3{,}170.80 & 117{,}082.60 & 3{,}258.20 & 3{,}877.30 & 136.80 \\
    & 1{,}000  & 16{,}404.10 & 3{,}174.60 & 118{,}559.12 & 3{,}358.14 & 3{,}798.43 & 152.60 \\
    & 10{,}000  & 15{,}722.20 & 3{,}235.60 & 115{,}890.00 & 3{,}348.20 & 3{,}788.40 & 172.40 \\
    \midrule
    
    \multirow{5}{*}{\textbf{Neo4j}} 
    & 1     & 1{,}324.90 & 572.60 & 173{,}106.40 & 268.50 & 13{,}181.40 & 145.30 \\
    & 10    & 2{,}316.14 & 760.80 & 170{,}786.27 & 407.12 & 14{,}725.46 & 113.13 \\
    & 100   & 4{,}290.70 & 3{,}185.90 & \timeout & 621.70 & 16{,}108.20 & 224.40 \\
    & 1{,}000  & 6{,}115.30 & 23{,}148.54 & \timeout & 4{,}135.32 & 24{,}139.87 & 442.88 \\
    & 10{,}000  & 7{,}307.90 & 142{,}413.20 & \timeout & 49{,}078.00 & 69{,}088.20 & 313.00 \\
    \midrule
    
    \multirow{5}{*}{\textbf{Memgraph}} 
    & 1     & \textbf{21.50} & 42.00 & 19{,}912.40 & 27.60 & 6.80 & 1.30 \\
    & 10    & \textbf{21.08} & 36.92 & 41{,}272.58 & 24.26 & \textbf{6.42} & 1.64 \\
    & 100   & \textbf{24.60} & 50.50 & \timeout & 34.40 & \textbf{5.60} & 2.00 \\
    & 1{,}000  & \textbf{28.14} & \textbf{113.78} & \timeout & \textbf{27.68} & \textbf{6.64} & \textbf{4.10} \\
    & 10{,}000  & \textbf{99.10} & \textbf{423.80} & \timeout & \textbf{59.70} & \textbf{30.20} & 23.60 \\
    \midrule
    
    \multirow{5}{*}{\textbf{MillDB}} 
    & 1     &  1{,}599.68 & 451.58 & 634.56 & 233.53 & \textbf{2.35} & 214.05 \\
    & 10    & 1{,}751.43 & 995.05 & 687.08 & 273.11 & 7.86 & 157.41 \\
    & 100     & 2{,}273.16 & 5{,}140.34 & 825.91 & 588.99 & 32.13 & 156.70 \\
    & 1{,}000  & 6{,}687.00 & 33{,}693.54 & 2{,}287.22 & 465.04 & 71.40 & 165.09 \\
    & 10{,}000     &  20{,}904.32 & 49{,}924.80 & 14{,}777.50  & 547.92  & 665.80 & 298.50 \\
    \midrule
    
    \multirow{5}{*}{\textbf{PG-Ring}} 
    & 1     & 129.84 & \textbf{1.49} & \textbf{0.28} & \textbf{4.98} & 105.26  & \textbf{0.08} \\
    & 10     & 262.09 & \textbf{3.04} & \textbf{1.22} & \textbf{1.46} & 117.94 & \textbf{0.13} \\
    & 100     & 1{,}131.02  &  \textbf{23.18} & \textbf{5.26} & \textbf{24.55} & 212.99 & \textbf{0.52} \\
    & 1{,}000  & 8{,}804.49 & 180.28 & \textbf{28.82} & 157.19 & 1{,}136.67 & 4.37 \\
    & 10{,}000  &  56{,}307.18 & 1{,}603.29 & 281.60 & 911.71 & 10{,}081.64 & \textbf{23.35} \\
    
    \bottomrule
    \end{tabular}
\end{table}

\begin{figure}[t!]
    \includegraphics[width=0.95\linewidth]{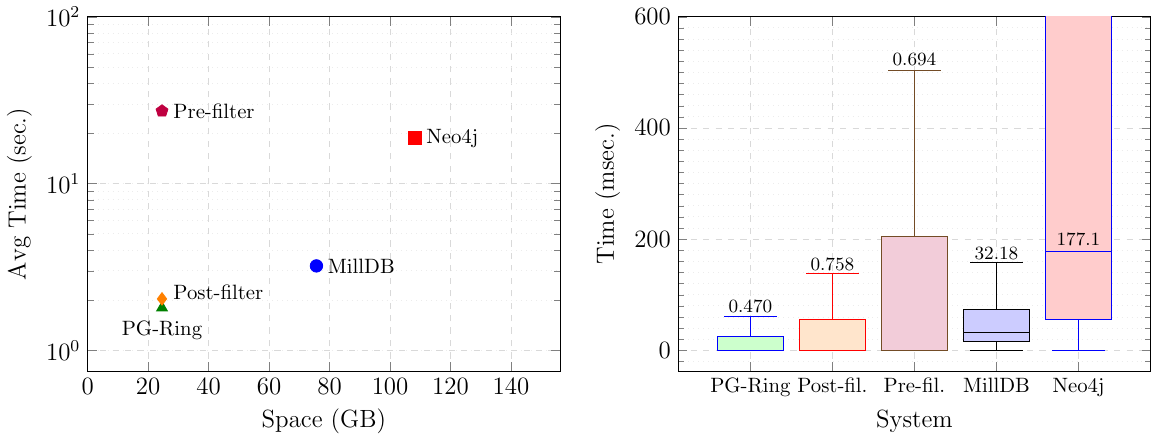}
    \caption{The space-time trade-off and times distribution on the Wikidata benchmark in the \textit{cold cache} scenario. The numbers in the boxplots indicate the median of execution times.}
    \label{fig:wiki-cold}
\end{figure}

\end{document}